\definecolor{c20}{rgb}{1.00,0.00,0.00}
\definecolor{c30}{rgb}{0.,0.,1.}
\definecolor{c40}{rgb}{1,0.1,0.7}
\definecolor{c50}{rgb}{1,0,0}
\definecolor{c60}{rgb}{1,0.9,0.1}
\definecolor{c70}{rgb}{0.50,1.00,0.00}
\definecolor{c80}{rgb}{0.00,1.00,0.00}
\date{}
\newtheorem{theorem}{Theorem}[]
\newtheorem{remark}{Remark}[]
\numberwithin{equation}{section}
\def\P{\operatorname*{\mathbb{P}}}
\def\E{\operatorname*{\mathrm{E}}}
\def\V{\operatorname*{\mathrm{V}}}
\begin{document}
\title{Dynamic Bivariate Normal Copula}
\author{{ Xin Liao$^1$, Liang Peng$^2$, Zuoxiang Peng$^1$\thanks{Corresponding author. Email: pzx@swu.edu.cn}, Yanting Zheng$^3$}\\
{\small\it $^1$School of Mathematics and Statistics, SouthWest University, Chongqing 400715, China. }\\
{\small \it $^2$School of Mathematics, Georgia Institute of Technology, Atlanta, GA 30332-0160, USA.}\\
{\small \it $^3$Department of Finance, Beijing Technology and Business University, Beijing 100048, China.}}
\maketitle

\begin{quote}
{\bf Abstract}~~Normal copula with a correlation coefficient between
$-1$ and $1$ is tail independent and so it severely underestimates
extreme probabilities. By letting the correlation coefficient in a
normal copula depend on the sample size, H\"usler and Reiss (1989) showed
that the tail can become asymptotically dependent. In this paper, we
extend this result by deriving the limit of the normalized maximum
of $n$ independent observations, where the $i$-th observation follows
from a normal copula with  its correlation coefficient being either
a parametric or a nonparametric function of $i/n$. Furthermore, both
parametric and nonparametric inference for this unknown function are
studied, which can be employed to test the condition in H\"usler and
Reiss (1989). A simulation study and real data analysis are
presented too.

{\bf Key words}~~Estimation; normal copula; tail dependence/independence.

{\bf AMS 2000 subject classification}~~Primary 62F12, 62G30; Secondary 660G70, 62G32.

\end{quote}

\section{Introduction}
\label{sec1}

Let $\{(X_{i},Y_i)\}_{i=1}^n$ be independent and identically
distributed  random vectors with distribution function $F(x,y)$,
continuous marginals $F_1$ and $F_2$. The copula of $F$ is defined
as $F(F_1^-(x),F_2^-(y))$, where $F^-_i$ denotes the inverse
function of $F_i$. Assume the copula of $F$ follows from a normal
copula $C(x,y;\rho)$, where $\rho\in [-1, 1]$ is  unknown. Hence
the density of $C(x,y;\rho)$ is
\begin{equation}\label{density}
c(x,y;\rho)=\frac
1{\sqrt{1-\rho^2}}\exp\left(\frac{2\rho\Phi^-(x)\Phi^-(y)-\rho^2(\Phi^-(x))^2-\rho^2(\Phi^-(y))^2}{2(1-\rho^2)}\right)
\end{equation}
for $\rho\in (-1, 1)$, where $\Phi(x)$ is the standard normal
distribution function.

Normal copulas are one of most commonly used elliptical copulas, and
elliptical copulas are popular in risk management  due to their ease
of simulation (see McNeil, Frey and Embrechts (2005)).
Recently Channouf and L'Ecuyer (2012) used  normal copulas to
model arrive processes in a call center, Fung and Seneta (2011)
showed that a bivariate normal copula is regularly varying, Meyer (2013)
studied the properties of a bivariate normal copula,
efficient estimation for bivariate normal copula models was studied
by Klaassen and Wellner (1997). Although normal copulas are
easy to use and have some attractive properties,  a serious drawback
of using a normal copula is  the so-called tail asymptotic independence (see Sibuya
(1960)), which under-estimates  extreme probabilities in risk
management.

To overcome the shortcoming of  the tail asymptotic independence of
a normal copula,  Frick and Reiss (2013) assumed that
$\rho=\rho(n)$ satisfies the so-called H\"{u}sler-Reiss condition

\begin{equation}
\label{rho1}(1-\rho)\log n\to\lambda\in [0,
\infty]\quad\text{as}\quad n\to\infty,
\end{equation}
(cf. H\"{u}sler and Reiss (1989)) and proved that
\begin{equation}
\label{reiss}\begin{array}{ll}
&\P\Big(n(\max_{1\le i\le n}F_1(X_i)-1)\le x, n(\max_{1\le i\le n}F_2(Y_i)-1)\le y\Big)\\
\to&\exp\left(\Phi(\sqrt\lambda+\frac{\log(x/y)}{2\sqrt\lambda})x+\Phi(\sqrt\lambda+\frac{\log(y/x)}{2\sqrt\lambda})y\right)
\end{array}
\end{equation}
for $x<0$ and $y<0$ as $n\to\infty$. This is  the copula version of
the limit in H\"usler and Reiss (1989) for the normalized
maxima of $n$ independent random vectors with a bivariate normal
distribution and its correlation coefficient satisfying
(\ref{rho1}). Obviously, a bivariate random vector with the above
limiting distribution is dependent when $\lambda<\infty$. Extending
the results in H\"usler and Reiss (1989) to elliptical
triangular arrays is given in Hashorva (2005, 2006).

Since the above $\rho$ depends on the sample size $n$, one may call
it dynamic normal copula.  Recently dynamic copulas are receiving
some attention in modeling financial time series; see Benth and
Kettler (2011), Mendes and de Melo (2010), Gu\'{e}gan
and Zhang (2010), and Van den Goorbergh, Genest and Werker
(2005).

In this paper, we further study the convergence in (\ref{reiss}) by
allowing $\rho$ to depend on both $i$ and $n$. That is, we do not
assume that $(X_i,Y_i)'s$ are identically distributed.
Motivated by (\ref{rho1}), an obvious extension is to assume that $(1-\rho)\log n$ is a  function of $i$ and $n$. As in nonparametric regression models, we assume that $(1-\rho)\log n$ is a smoothing nonparametric or parametric function of $i/n$ so that we can employ well-developed local polynomial techniques to estimate this function and to test whether this function is a constant, which gives a way to verify the condition imposed by H\"usler and Reiss (1989) and
Frick and Reiss (2013), and indicates the observations have the same distribution.
More
specifically we assume that $\{(X_i,Y_i)\}_{i=1}^n$  is a sequence
of independent random vectors and  the copula of $(X_i,Y_i)$ is a
normal copula with correlation coefficient $\rho=1-m(i/n)/\log n$
for an unknown smooth function $m(x)$. After deriving the convergence for the
normalized maxima of the copulas of $(X_i,Y_i)'s$, we propose both
parametric and  nonparametric estimation for $m(x)$, which are   based on either Kendall's tau or correlation
coefficient. We also derive the asymptotic limits of the proposed estimators, 
which turn out to be quite nonstandard with an unusual rate of convergence. 
The proposed estimators can be used to determine tail dependence, 
which is of importance in predicting co-movement in financial  markets; 
see McNeil, Frey and Embrechts (2005).
%The proposed extension and statistical inference can be employed to
%test the condition imposed by H\"usler and Reiss (1989) and
%Frick and Reiss (2013).

We organize this paper as follows. Section 2 presents the main
results and statistical inference procedures.  A simulation study is
given in Section 3. Section 4 reports some empirical data analyses.
All the proofs are given in Section 5.

\section{Methodology}

Throughout, suppose $\{(X_i,Y_i)\}_{i=1}^n$ are independent random
vectors,  $X_i's$ have the same  continuous distribution function
$F_1$ and $Y_i's$ have the same continuous distribution function
$F_2$. Assume the copula of $(X_i,Y_i)$ is the normal copula
$C(x,y;\rho_i)$ with density given by (\ref{density}).

\subsection{Convergence of maxima and tail coefficient}

As motivated in the introduction,  we  extend the result (\ref{reiss}) by assuming
\begin{equation}\label{cond1}
\rho_i=1-m(i/n)/\log n\quad\text{for some nonnegative function}\quad
m(s),
\end{equation}
which includes condition (\ref{rho1}) as a special case.

\begin{theorem}\label{th1} Under condition (\ref{cond1}),
\begin{itemize}
\item[i)]~~if $\max_{1\le i\le n}m(i/n)\to 0$, then for any $x<0$ and $y<0$
\[\lim_{n\to\infty}\P\Big(n(\max_{1\le i\le n}F_1(X_i)-1)\le x, n(\max_{1\le i\le n}F_2(Y_i)-1)\le y\Big)=\exp\Big(\min(x,y)\Big);\]
\item[ii)]~~if $\min_{1\le i\le n}m(i/n)\to\infty$, then for any $x<0$ and
$y<0$
\[\lim_{n\to\infty}\P\Big(n(\max_{1\le i\le n}F_1(X_i)-1)\le x, n(\max_{1\le i\le n}F_2(Y_i)-1)\le y\Big)=\exp(x+y);\]
\item[iii)]~~if $m(s)$ is a continuous positive function on $[0, 1]$, then
for any $x<0$ and $y<0$
\[\begin{array}{ll}
&\lim_{n\to\infty}\P\Big(n(\max_{1\le i\le n}F_1(X_i)-1)\le x, n(\max_{1\le i\le n}F_2(Y_i)-1)\le y\Big)\\
=&\exp\left(x\int_0^1\Phi\left(\sqrt{m(s)}+\frac{\log(x/y)}{2\sqrt{m(s)}}\right)\,ds+y\int_0^1\Phi\left(\sqrt{m(s)}+\frac{\log(y/x)}{2\sqrt{m(s)}}\right)\,ds\right)\\
=:&G(x,y).
\end{array}\]
\end{itemize}
Furthermore the tail dependence function $l(x,y)=\lim_{t\to 0}
t^{-1}\{1-G(tx,ty)\}$ equals
\[-x\int_0^1\Phi\left(\sqrt{m(s)}+\frac{\log(x/y)}{2\sqrt{m(s)}}\right)\,ds-y\int_0^1\Phi\left(\sqrt{m(s)}+\frac{\log(y/x)}{2\sqrt{m(s)}}\right)\,ds\]
for $x<0$ and $y<0$, and the tail coefficient is
$\lambda=l(-1,-1)=2\int_0^1\Phi(\sqrt{m(s)})\,ds.$
\end{theorem}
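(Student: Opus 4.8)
The plan is to establish a single general estimate for the joint survival‐type probability and then read off all three cases and the tail quantities as corollaries. Write $u_n(x)=1+x/n$ for $x<0$, so that $n(\max_i F_1(X_i)-1)\le x$ is the event $\{\max_i F_1(X_i)\le u_n(x)\}$. Since the $(X_i,Y_i)$ are independent with copula $C(\cdot,\cdot;\rho_i)$,
\[
\P\Big(n(\max_i F_1(X_i)-1)\le x,\; n(\max_i F_2(Y_i)-1)\le y\Big)=\prod_{i=1}^n C\big(u_n(x),u_n(y);\rho_i\big).
\]
Taking logarithms, this is $\exp\big(\sum_{i=1}^n \log C(u_n(x),u_n(y);\rho_i)\big)$, and because each factor is $1+o(1)$ uniformly, it suffices to control $\sum_{i=1}^n \big(1-C(u_n(x),u_n(y);\rho_i)\big)$. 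The first main step is therefore a uniform asymptotic expansion: for $u$ near $1$ and $\rho=1-m/\log n$,
\[
n\big(1-C(u_n(x),u_n(y);\rho)\big)= -x\,\Phi\!\Big(\sqrt{m}+\tfrac{\log(x/y)}{2\sqrt m}\Big)-y\,\Phi\!\Big(\sqrt{m}+\tfrac{\log(y/x)}{2\sqrt m}\Big)+o(1),
\]
with the $o(1)$ uniform in $i$ (equivalently in $m=m(i/n)$ ranging over the relevant set). This is exactly the H\"usler–Reiss computation underlying \eqref{reiss}, adapted to a copula evaluated at $u_n(x),u_n(y)$; one uses $1-C(u,v;\rho)=(1-u)+(1-v)-(1-C)$ rewritten via the bivariate normal tail, the substitution $u=\Phi(-t_n)$ with $t_n=\sqrt{2\log n}+o(\sqrt{\log n})$, and Mills‐ratio asymptotics for the Gaussian tail, together with the Komlós–Major–Tusnády / Savage-type estimate $\P(Z_1>s+a, Z_2>s-a)\sim$ (explicit) for a bivariate normal with correlation $\rho\to 1$ as $s\to\infty$. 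The key algebraic identity that produces the $\Phi(\sqrt m+\log(x/y)/(2\sqrt m))$ form is the standard decomposition of the conditioning variable into its mean $\rho t$ and residual variance $1-\rho^2$.

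With the uniform expansion in hand, the second step is a Riemann‐sum argument. In case (iii), $m$ is continuous and positive on $[0,1]$, so $\sqrt{m(i/n)}$ is bounded away from $0$ and $\infty$; the summand of $\sum_{i=1}^n n^{-1}\cdot n(1-C(\cdots;\rho_i))$ is a continuous function of $m(i/n)$, hence
\[
\sum_{i=1}^n\big(1-C(u_n(x),u_n(y);\rho_i)\big)\longrightarrow \int_0^1\!\Big(-x\,\Phi\big(\sqrt{m(s)}+\tfrac{\log(x/y)}{2\sqrt{m(s)}}\big)-y\,\Phi\big(\sqrt{m(s)}+\tfrac{\log(y/x)}{2\sqrt{m(s)}}\big)\Big)ds,
\]
which gives $G(x,y)$ after exponentiating and negating. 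Cases (i) and (ii) are the degenerate limits: if $\max_i m(i/n)\to 0$ then $\sqrt m\to 0$, the argument $\sqrt m+\log(x/y)/(2\sqrt m)$ tends to $+\infty$ when $x<y$ (i.e.\ $\log(x/y)<0$ is $\log|x|-\log|y|$; one checks signs carefully) — more cleanly, $\Phi(\sqrt m+c/\sqrt m)\to \mathbf 1\{c>0\}$ plus a boundary term, and summing gives $-\min(x,y)$, i.e.\ the completely dependent limit $\exp(\min(x,y))$; if $\min_i m(i/n)\to\infty$ then $\sqrt m\to\infty$ dominates and $\Phi(\sqrt m+c/\sqrt m)\to 1$, giving $-(x+y)$ and the independent limit $\exp(x+y)$. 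These are handled by dominated convergence for the sums after the uniform expansion, possibly splitting the index set according to the size of $m(i/n)$.

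Finally, the tail dependence function and tail coefficient follow directly from $G$. Since $G(x,y)=\exp(-l(x,y))$ with $l(x,y)$ homogeneous of degree $1$ — note $\log(tx/ty)=\log(x/y)$, so the integrands are scale‐invariant and $l(tx,ty)=t\,l(x,y)$ — we get $t^{-1}\{1-G(tx,ty)\}=t^{-1}\{1-e^{-t\,l(x,y)}\}\to l(x,y)$ as $t\to 0$, which is the displayed formula; setting $x=y=-1$ makes $\log(x/y)=0$ and yields $\lambda=2\int_0^1\Phi(\sqrt{m(s)})\,ds$. The main obstacle is the \emph{uniform} bivariate‐Gaussian tail expansion in the first step: one must show the remainder in the H\"usler–Reiss asymptotics is $o(1)$ uniformly as $m(i/n)$ ranges over its (possibly unbounded, in cases (i)–(ii)) set of values and as $u\to 1$ at the rate $1+x/n$ simultaneously — this requires careful bookkeeping of the Mills‐ratio error terms and the $\sqrt{2\log n}$ expansion, and is where essentially all the analytic work lies; the Riemann‐sum and homogeneity steps are then routine.
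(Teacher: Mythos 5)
Your proposal is correct and follows essentially the same route as the paper: reduce the product to $\exp\bigl(-\sum_{i=1}^n\{1-C(1+\tfrac xn,1+\tfrac yn;\rho_i)\}\bigr)$, establish the H\"usler--Reiss-type expansion of each term via the normal quantile expansion $\Phi^-(1+y/n)=\sqrt{2\log n}-\cdots$ and the conditional Gaussian representation of the joint survival probability, then pass to the Riemann integral over $m(s)$ and read off cases i)--ii) and the tail quantities by homogeneity. The one device you leave implicit --- and which the paper makes explicit --- is truncating the inner integral away from $s=0$ (where $\log(-s)$ blows up, so the expansion is only uniform on $[x,-\epsilon]$) and taking $\lim_{\epsilon\to0}\lim_{n\to\infty}$; this is part of the ``careful bookkeeping'' you correctly identify as the main analytic work.
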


\subsection{Parametric inference}

Here we consider statistical inference for fitting a parametric form
to the unknown function $m(s)$.  First, we consider the family
$m(s)=\alpha+\beta s^\gamma$, where $\alpha>0, \beta\neq 0,
\gamma>0.$ Note that when $\beta=0$, $\gamma$ can not be identified.
Also when $\gamma=0$, $\alpha$ and $\beta$ can not be distinguished.

It follows from Theorem 5.36 of McNeil, Frey and Embrechts
(2005) that
\begin{equation}\label{kendall}
\E\Big( sgn((U_i-\tilde U_i)(V_i-\tilde V_i))\Big)=\frac
2{\pi}\arcsin(\rho_i),
\end{equation}
where $(\tilde U_i,\tilde V_i)$ is an independent copy of
$(U_i,V_i)$, and
\begin{equation}\label{spearman} \E\left((U_i-\frac 12)(V_i-\frac 12)\right)=\frac 1{2\pi}\arcsin(\frac{\rho_i}2).
\end{equation}
Also we have
\begin{equation}
\label{rho} \E\Big(\Phi^-(F_1(X_i))\Phi^-(F_2(Y_i))\Big)=\rho_i.
\end{equation}
Therefore, one can employ the standard least squares estimate based on one of
the above equations.

Since $(U_i,V_i)'s$ are not identically distributed, we do not have
an independent copy of $(U_i,V_i)$, which prevents us from using
(\ref{kendall}). Hence we propose to use either (\ref{spearman}) or
(\ref{rho}) to construct the least squares estimator, which results
in
\[(\hat\alpha, \hat\beta,\hat\gamma)=\arg\min_{(\alpha,\beta,\gamma)}\sum_{i=1}^n\left((\hat F_1(X_i)-\frac 12)(\hat F_2(Y_i)-\frac 12)-\frac 1{2\pi}\arcsin(\frac{1-(\alpha+\beta (i/n)^\gamma)/\log n}{2})\right)^2\]
or
\[(\hat\alpha^*, \hat\beta^*,\hat\gamma^*)=\arg\min_{(\alpha,\beta,\gamma)}\sum_{i=1}^n\left(\Phi^-(\hat F_1(X_i))\Phi^-(\hat F_2(Y_i))-1+\frac{\alpha+\beta (i/n)^{\gamma}}{\log n}\right)^2,\]
where $\hat F_1(x)=\frac 1{n+1}\sum_{i=1}^nI(X_i\le x)$ and $\hat
F_2(y)=\frac 1{n+1}\sum_{i=1}^nI(Y_i\le y)$. Alternatively we define
$(\hat\alpha, \hat\beta, \hat\gamma)$ to be the solution to the
following score equations
\begin{equation}
\label{score} \left\{\begin{array}{ll}
&l_{n1}(\alpha,\beta,\gamma):=\sum_{i=1}^n\Big( (\hat F_1(X_i)-\frac 12)(\hat F_2(Y_i)-\frac 12)-\frac 1{2\pi}\arcsin(\frac{\rho_i}2)\Big)=0,\\
&l_{n2}(\alpha,\beta,\gamma):=\sum_{i=1}^n\Big((\hat F_1(X_i)-\frac 12)(\hat F_2(Y_i)-\frac 12)-\frac 1{2\pi}\arcsin(\frac{\rho_i}2)\Big)(\frac in)^{\gamma}=0,\\
&l_{n3}(\alpha,\beta,\gamma):=\sum_{i=1}^n\Big((\hat F_1(X_i)-\frac
12)(\hat F_2(Y_i)-\frac 12)-\frac
1{2\pi}\arcsin(\frac{\rho_i}2)\Big)(\frac in)^{\gamma}\log(\frac
in)=0
\end{array}\right.\end{equation}
and $(\hat\alpha^*,\hat\beta^*,\hat\gamma^*)$ to be the solution to
the following score equations
\begin{equation}
\label{score*} \left\{\begin{array}{ll}
&l_{n1}^*(\alpha,\beta,\gamma):=\sum_{i=1}^n\Big(\Phi^-(\hat F_1(X_i))\Phi^-(\hat F_2(Y_i))-1+\frac{\alpha+\beta(i/n)^{\gamma}}{\log n}\Big)=0,\\
&l_{n2}^*(\alpha,\beta,\gamma):=\sum_{i=1}^n\Big(\Phi^-(\hat F_1(X_i))\Phi^-(\hat F_2(Y_i))-1+\frac{\alpha+\beta(i/n)^{\gamma}}{\log n}\Big)(\frac in)^{\gamma}=0,\\
&l_{n3}^*(\alpha,\beta,\gamma):=\sum_{i=1}^n\Big(\Phi^-(\hat
F_1(X_i))\Phi^-(\hat
F_2(Y_i))-1+\frac{\alpha+\beta(i/n)^{\gamma}}{\log n}\Big)(\frac
in)^{\gamma}\log(\frac in)=0.
\end{array}\right.\end{equation}
Note that  we skip the term of $\frac{d}{d\rho_i}\arcsin(\rho_i/2)$
in (\ref{score}), which goes to a constant uniformly in $i$ since
$\rho_i\to 1$ uniformly in $i$.

The following theorems give the asymptotic normality of the proposed
estimators.
\begin{theorem}\label{th2}
Suppose (\ref{cond1}) holds with $m(s)=\alpha+\beta s^{\gamma}$ for
some $\alpha>0, \beta\neq 0, \gamma>0$. Then we have
\begin{equation}
\label{th2-1}
\begin{pmatrix}
\frac{\sqrt n}{(\log n)^{3/4}} &0&0\\
0&\frac{\sqrt n}{\log n}&0\\
0&0&\frac{\sqrt n}{\log n}
\end{pmatrix}
\hat\Delta \begin{pmatrix}
\hat\alpha-\alpha\\
\hat\beta-\beta\\
\hat\gamma-\gamma
\end{pmatrix}\\
\overset{d}{\to}N(0,\Sigma)
\end{equation}
and
\begin{equation}\label{th2-2}
\left(\frac{\sqrt n}{\log n}(\hat\alpha-\alpha), \frac{\sqrt n}{\log
n}(\hat\beta-\beta), \frac{\sqrt n}{\log
n}(\hat\gamma-\gamma)\right)^T \overset{d}{\to} N\Big(0,
\Delta^{-1}\Sigma_0(\Delta^{-1})^{T}\Big),\end{equation} where
\[\Sigma=\begin{pmatrix}
\sigma_{11}&0&0\\
0&\sigma_{22}&\sigma_{23}\\
0&\sigma_{23}&\sigma_{33}
\end{pmatrix}, \quad \Sigma_0=\begin{pmatrix}
0&0&0\\
0&\sigma_{22}&\sigma_{23}\\
0&\sigma_{23}&\sigma_{33}
\end{pmatrix},\]
\[\left\{\begin{array}{ll}
&\sigma_{11}=\sqrt 2 \Big(\int_0^1\sqrt{\alpha+\beta s^\gamma}\,ds\Big)\Big( \int_0^1(u-\frac 12)^2\phi(\Phi^-(u))\,du\Big),\\
&\sigma_{22}=\frac 1{180(1+2\gamma)}-\frac 1{180(1+\gamma)^2},\quad \sigma_{33}=\frac 1{90(1+2\gamma)^3}-\frac 1{180(1+\gamma)^4},\\
&\sigma_{23}=-\frac 1{180(1+2\gamma)^2}+\frac{1}{180(1+\gamma)^3},
\end{array}\right.\]
\[\Delta=\frac{\sqrt 3}{6 \pi}\begin{pmatrix}
1 &\frac 1{1+\gamma} &-\frac{\beta}{(1+\gamma)^2}\\
\frac 1{1+\gamma} &\frac 1{1+2\gamma} &-\frac{\beta}{(1+2\gamma)^2}\\
-\frac 1{(1+\gamma)^2}&-\frac 1{(1+2\gamma)^2}
&\frac{2\beta}{(1+2\gamma)^3}
\end{pmatrix},\quad \hat\Delta=\frac{\sqrt 3}{6 \pi}\begin{pmatrix}
1 &\frac 1{1+\hat\gamma} &-\frac{\hat\beta}{(1+\hat\gamma)^2}\\
\frac 1{1+\hat\gamma} &\frac 1{1+2\hat\gamma} &-\frac{\hat\beta}{(1+2\hat\gamma)^2}\\
-\frac 1{(1+\hat\gamma)^2}&-\frac 1{(1+2\hat\gamma)^2}
&\frac{2\hat\beta}{(1+2\hat\gamma)^3}
\end{pmatrix}.\]
\end{theorem}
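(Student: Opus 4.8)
The plan is to treat $(\hat\alpha,\hat\beta,\hat\gamma)$ as a $Z$-estimator. Write $\theta=(\alpha,\beta,\gamma)$, let $\theta_0$ be the truth, $\mathbf l_n(\theta)=(l_{n1},l_{n2},l_{n3})^T$, $\rho_i(\theta)=1-(\alpha+\beta(i/n)^{\gamma})/\log n$, $U_i=F_1(X_i)$, $V_i=F_2(Y_i)$; a mean-value expansion of $0=\mathbf l_n(\hat\theta)$ gives $\nabla\mathbf l_n(\tilde\theta)(\hat\theta-\theta_0)=-\mathbf l_n(\theta_0)$ for some $\tilde\theta$ on the segment, so the proof reduces to (i) $\frac{\log n}{n}\nabla\mathbf l_n(\theta_0)\overset{d}{\to}\Delta$ with stability on a $\tfrac{\log n}{\sqrt n}$-neighbourhood of $\theta_0$, (ii) a joint CLT for the rescaled score, (iii) consistency of $\hat\theta$ to license the expansion, and (iv) a sharpened first-coordinate analysis. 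For (i), in $\partial l_{nj}/\partial\theta_k$ the dominant piece is obtained by differentiating $\tfrac1{2\pi}\arcsin(\rho_i/2)$; since $\rho_i\to1$ uniformly in $i$ one may replace $\tfrac{d}{d\rho}\arcsin(\rho/2)$ by its limit $1/\sqrt3$ (the simplification announced after (\ref{score})), and the Riemann sums $\tfrac1n\sum_i(i/n)^{a}\big(\log(i/n)\big)^{b}$ converge to $\int_0^1 s^{a}(\log s)^{b}\,ds$, which gives the entries of $\Delta$; the terms produced by differentiating the data-dependent weight $(i/n)^{\gamma}$ multiply mean-zero residuals and are $o_p(n/\log n)$. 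For (iii) I would run the standard identifiability argument: the population estimating map has $\theta_0$ as its only nearby zero, $\tfrac1n\mathbf l_n$ converges uniformly on compacta, and $\det\Delta\neq0$ because $1,\,s^{\gamma},\,s^{\gamma}\log s$ are linearly independent; with the expansion this also yields $\hat\theta-\theta_0=O_p(\log n/\sqrt n)$ and licenses replacing $\Delta$ by $\hat\Delta$.

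Step (ii) is the core. Writing $\hat F_1(X_i)=U_i+\varepsilon_{1i}$, $\hat F_2(Y_i)=V_i+\varepsilon_{2i}$ with $\varepsilon_{ki}$ the uniformly $O_p(n^{-1/2})$ empirical-distribution corrections, expanding the products, and passing each $\varepsilon$-cross term to its H\'ajek projection, one obtains, with $w^{(1)}_i\equiv1$, $w^{(2)}_i=(i/n)^{\gamma}$, $w^{(3)}_i=(i/n)^{\gamma}\log(i/n)$ and $W^{(j)}=\int_0^1 w^{(j)}(s)\,ds$,
\[
l_{nj}(\theta_0)=\sum_{i=1}^{n}\big(w^{(j)}_i-W^{(j)}\big)\,b_i+o_p(\sqrt n),\qquad b_i=\tfrac12\big[(U_i-\tfrac12)^2+(V_i-\tfrac12)^2\big]-\tfrac1{12},
\]
where the shift $-W^{(j)}\sum_i b_i$ comes from the H\'ajek projection of the rank corrections together with the elementary identity $\tfrac1{24}-\tfrac12(u-\tfrac12)^2=-\tfrac12\big((u-\tfrac12)^2-\tfrac1{12}\big)$, and $\sum_i(w^{(j)}_i-W^{(j)})\big[(U_i-\tfrac12)(V_i-\tfrac12)-\tfrac1{2\pi}\arcsin(\rho_i/2)\big]$ (mean zero by (\ref{spearman})) has been replaced by $\sum_i(w^{(j)}_i-W^{(j)})b_i$ up to an $O_p(\sqrt n/\log n)$ error. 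Since $\rho_i\to1$ uniformly, $(U_i,V_i)$ is asymptotically comonotone, so $\mathrm{Var}(b_i)\to\tfrac1{180}$; for $j=2,3$ the Riemann limits of $(w^{(j)}-W^{(j)})(w^{(k)}-W^{(k)})/180$ are exactly $\sigma_{jk}$, and a Lindeberg CLT for this triangular array of independent summands gives the joint asymptotic normality of $\big(l_{n2}(\theta_0),l_{n3}(\theta_0)\big)^T/\sqrt n$.

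The nonstandard rate arises from $j=1$, where $w^{(1)}\equiv W^{(1)}=1$: the $O_p(\sqrt n)$ term cancels identically, so $l_{n1}(\theta_0)=o_p(\sqrt n)$. The cancellation is most transparent from the fact that $\sum_i(\hat F_1(X_i)-\tfrac12)^2=\tfrac{n(n-1)}{12(n+1)}$ is nonrandom, which reduces $l_{n1}(\theta_0)$ — up to an $O_p(1)$ bias — to the centered version of $\tfrac1{n+1}\sum_{i\neq j}(U_i-\tfrac12)\big[\mathbf 1(U_j\le U_i,V_j>V_i)-\mathbf1(U_j>U_i,V_j\le V_i)\big]$. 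What then remains has to be extracted by expanding the near-comonotone normal copula in powers of $\sqrt{1-\rho_i}$ — in particular using $\P(U_i\le v<V_i)=\P(V_i\le v<U_i)=\sqrt{2(1-\rho_i)}\,\phi(\Phi^-(v))/\sqrt{2\pi}+o(\sqrt{1-\rho_i})$ uniformly in $v$ — and by carefully tracking the (partial) cancellations among the several terms of order $n/\log n$ that appear; the net effect is a variance of exact order
\[
\mathrm{Var}\big(l_{n1}(\theta_0)\big)=\frac n{\sqrt{\log n}}\,\sqrt2\Big(\int_0^1\!\sqrt{m(s)}\,ds\Big)\Big(\int_0^1\!(u-\tfrac12)^2\phi(\Phi^-(u))\,du\Big)\big(1+o(1)\big)=\frac n{\sqrt{\log n}}\,\sigma_{11}\big(1+o(1)\big),
\]
and a triangular-array CLT gives $\tfrac{(\log n)^{1/4}}{\sqrt n}\,l_{n1}(\theta_0)\overset{d}{\to}N(0,\sigma_{11})$, asymptotically independent of the previous pair — whence the block structure of $\Sigma$ and $\Sigma_0$. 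This second-order analysis is the main obstacle.

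Finally, assembling: from $\nabla\mathbf l_n(\tilde\theta)(\hat\theta-\theta_0)=-\mathbf l_n(\theta_0)$, $\tfrac{\log n}{n}\nabla\mathbf l_n(\tilde\theta)\overset{d}{\to}\Delta$, $\hat\Delta\overset{d}{\to}\Delta$ and Slutsky one gets $\hat\Delta(\hat\theta-\theta_0)=-\tfrac{\log n}{n}\mathbf l_n(\theta_0)\big(1+o_p(1)\big)$; combined with (ii) and (iv) this yields (\ref{th2-1}), and left-multiplying by $\Delta^{-1}$ while using $l_{n1}(\theta_0)/\sqrt n\overset{d}{\to}0$ against the jointly normal $l_{n2}(\theta_0)/\sqrt n,\,l_{n3}(\theta_0)/\sqrt n$ gives (\ref{th2-2}) with covariance $\Delta^{-1}\Sigma_0(\Delta^{-1})^T$. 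Besides the copula boundary-layer expansion underlying $\sigma_{11}$, the delicate points I expect to fight with are the bookkeeping of the H\'ajek projections that forces the leading term of $l_{n1}(\theta_0)$ to vanish, and showing that the $o_p(1)$ error in the Jacobian does not contaminate the slow first coordinate of $\hat\Delta(\hat\theta-\theta_0)$, which needs a coordinate-wise remainder bound rather than the crude Slutsky step.
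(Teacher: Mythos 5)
Your architecture coincides with the paper's: H\'ajek-project the rank corrections onto sums of independent terms, observe that the projection of $l_{n1}$ is degenerate at leading order while $l_{n2},l_{n3}$ are not, get the off-diagonal block of $\Sigma$ from the Riemann limits of $(w^{(j)}-W^{(j)})(w^{(k)}-W^{(k)})/180$, prove a triangular-array CLT, and finish with the Jacobian limit $\Delta$ and a Taylor expansion. Your representation $l_{nj}(\theta_0)=\sum_i(w^{(j)}_i-W^{(j)})b_i+o_p(\sqrt n)$ with $b_i=\tfrac12[(U_i-\tfrac12)^2+(V_i-\tfrac12)^2]-\tfrac1{12}$ is exactly the comonotone limit of the paper's projected summands, and it reproduces $\sigma_{22},\sigma_{33},\sigma_{23}$ correctly. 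However, the two computations that carry the actual content of the theorem are asserted rather than carried out, and neither follows from what you have written.

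First, $\sigma_{11}$. You correctly diagnose that the $O_p(\sqrt n)$ part of $l_{n1}$ cancels and that the surviving fluctuation comes from the $\sqrt{1-\rho_i}$ boundary layer of the near-comonotone copula, and you state the right answer; but the derivation is the heart of the proof and is missing. The paper obtains it from the expansion $\int_0^1(v-\tfrac12)^2c(u,v;\rho_i)\,dv=(u-\tfrac12)^2+\sqrt{1-\rho_i^2}\,\phi(\Phi^-(u))+O(1/\log n)$, which injects the term $\tfrac{\sqrt 2\sqrt{m(i/n)}}{\sqrt{\log n}}\int_0^1(u-\tfrac12)^2\phi(\Phi^-(u))\,du$ into $\E(Z_i-\E Z_i)^2$, together with an exact verification that the four $O(1)$ contributions to the second moment of the projected summand ($\tfrac1{720}+\tfrac1{720}+\tfrac1{180}+\tfrac2{720}-\tfrac4{360}$) sum to zero; only then is the variance of $l_{n1}(\theta_0)$ of order $n/\sqrt{\log n}$ with the stated constant. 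Second, the zeros in $\Sigma$: after rescaling the first coordinate by $(\log n)^{1/4}/\sqrt n$, Cauchy--Schwarz bounds the cross-covariance with $l_{n2}/\sqrt n$ only by $O(1)$, not $o(1)$, so asymptotic independence does not come for free from the different rates. One must check that the $O(1)$ terms of the cross-moments between the projected summands of $l_{n1}$ and those of $l_{n2},l_{n3}$ also cancel exactly (the paper does this term by term before invoking Cram\'er--Wold). Until these two cancellations are exhibited, the limit law in (\ref{th2-1}) --- both the value of $\sigma_{11}$ and the block-diagonal form of $\Sigma$ --- is unproved. Your step (iii) on consistency is, if anything, more careful than the paper, which passes directly to the Taylor expansion.
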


\begin{theorem}\label{th2a}

Suppose (\ref{cond1}) holds with $m(s)=\alpha+\beta s^{\gamma}$ for
some $\alpha>0, \beta\neq 0, \gamma>0$. Then we have
\begin{equation}\label{th2a-1}
\begin{pmatrix}
\sqrt n &0&0\\
0&\frac{\sqrt n}{\log n}&0\\
0&0&\frac{\sqrt n}{\log n}
\end{pmatrix}
\hat\Delta^*\begin{pmatrix}
\hat\alpha^*-\alpha\\
\hat\beta^*-\beta\\
\hat\gamma^*-\gamma
\end{pmatrix}\overset{d}{\to}N(0,\Sigma^*)
\end{equation}
and
\begin{equation}\label{th2a-2}
\left(\frac{\sqrt n}{\log n}(\hat\alpha^*-\alpha), \frac{\sqrt
n}{\log n}(\hat\beta^*-\beta), \frac{\sqrt n}{\log
n}(\hat\gamma^*-\gamma)\right)^T \overset{d}{\to} N(0,
(\Delta^*)^{-1}\Sigma^*_0(\Delta^{*T})^{-1}),\end{equation} where
\[\Sigma^*=\begin{pmatrix}
\sigma_{11}^* &0&0\\
0&\sigma_{22}^*&\sigma_{23}^*\\
0&\sigma_{23}^*&\sigma_{33}^*
\end{pmatrix},\quad \Sigma_0^*=\begin{pmatrix}
0&0&0\\
0&\sigma_{22}^*&\sigma_{23}^*\\
0&\sigma_{23}^*&\sigma_{33}^*
\end{pmatrix},\]
\[\left\{\begin{array}{ll}
&\sigma_{11}^*=4(\alpha+\frac\beta{1+\gamma})^2+2\beta^2(\frac 1{1+2\gamma}-\frac 1{(1+\gamma)^2}),\quad \sigma_{22}^*=\frac2{1+2\gamma}-\frac 2{(1+\gamma)^2},\\
&\sigma_{33}^*=\frac 4{(1+2\gamma)^3}-\frac 2{(1+\gamma)^4},\quad
\sigma_{23}^*=-\frac 2{(1+2\gamma)^2}+\frac 2{(1+\gamma)^3},
\end{array}\right.\]
$\Delta^*=2\sqrt 3 \pi\Delta$ and $\hat\Delta^*=2\sqrt
3\pi\hat\Delta$, where $\Delta$ and $\hat\Delta$ are given in
Theorem \ref{th2}.

\end{theorem}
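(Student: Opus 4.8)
\medskip
\noindent\textbf{Proof proposal for Theorem \ref{th2a}.}
The plan is to mirror the argument for Theorem \ref{th2}, replacing the Kendall/Spearman residuals by the correlation residuals $\Phi^-(\hat F_1(X_i))\Phi^-(\hat F_2(Y_i))-\rho_i$; the additional difficulty is that $\Phi^-$ is unbounded near $0$ and $1$, so the corrections coming from using the empirical marginals must be controlled near the boundary. First I would show that with probability tending to one the system (\ref{score*}) has a solution $(\hat\alpha^*,\hat\beta^*,\hat\gamma^*)$ consistent for $(\alpha,\beta,\gamma)$ --- identifiability being guaranteed by $\alpha>0$, $\beta\neq0$, $\gamma>0$ as remarked before Theorem \ref{th2} --- and then apply the mean value theorem to each equation $l^*_{nj}(\hat\alpha^*,\hat\beta^*,\hat\gamma^*)=0$ about the true value, obtaining $\hat\theta^*-\theta=-J_n^{-1}l_n^*(\theta)$, where $J_n$ is the Jacobian of $(l^*_{n1},l^*_{n2},l^*_{n3})$ at an intermediate (hence consistent) point and $l_n^*(\theta)$ denotes the score vector at the truth.

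Second, I would show $\frac{\log n}{n}J_n\overset{p}{\to}\Delta^*$. Differentiating (\ref{score*}), the dominant contributions come from differentiating the fitted term $\frac{\alpha+\beta(i/n)^\gamma}{\log n}$, which produces sums $\frac1{\log n}\sum_i(i/n)^{a\gamma}(\log(i/n))^b$ that, after division by $n/\log n$, converge to the integrals $\int_0^1 s^{a\gamma}(\log s)^b\,ds$ appearing in $\Delta^*$; the terms obtained by differentiating the weights $(i/n)^\gamma$ are multiplied by the (small) residuals and are $o_p$ of the leading order. Consistency of $\hat\theta^*$ then gives $\hat\Delta^*\overset{p}{\to}\Delta^*$, and $\Delta^*$ is invertible because $\beta\neq0$.

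Third --- and this is the crux --- I would derive an asymptotic linear (Bahadur--H\'ajek) representation of $l_n^*(\theta)$. Writing $\Phi^-(\hat F_1(X_i))\Phi^-(\hat F_2(Y_i))=\Phi^-(F_1(X_i))\Phi^-(F_2(Y_i))+(\text{rank corrections})+(\text{remainder})$ and using that $\{\hat F_1(X_i)\}_i$ and $\{\hat F_2(Y_i)\}_i$ are permutations of $\{1/(n+1),\dots,n/(n+1)\}$, the effect of using the empirical marginals is, to leading order, to subtract the sample mean from the "signal" $\Phi^-(F_1(X_i))\Phi^-(F_2(Y_i))$. Hence the fluctuation of $l^*_{nj}(\theta)$ behaves like $n^{-1/2}\sum_i\varepsilon_i\big((i/n)^{\gamma}-\tfrac1{1+\gamma}\big)$ for $j=2$, and similarly with the recentred weight $(i/n)^\gamma\log(i/n)$ for $j=3$, where $\varepsilon_i$ is mean zero with variance $1+\rho_i^2\to2$; this recentring is precisely what makes the $j=1$ (constant-weight) coordinate degenerate, so that $l^*_{n1}(\theta)$ is only of the smaller order $\sqrt n/\log n$, driven by the terms $\rho_i-1=-m(i/n)/\log n$. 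The delicate point is justifying the linearization uniformly over the triangular array with $\rho_i\to1$ while $\Phi^-$ blows up at $0$ and $1$: I would split each sum at ranks in $[\varepsilon n,(1-\varepsilon)n]$ versus the extremes, bound the extreme-order-statistic part using $\Phi^-(k/(n+1))=O(\sqrt{\log n})$ and $\sum_k\Phi^-(k/(n+1))^2=O(n)$, and absorb the second-order term into a remainder. A Lyapunov/Lindeberg CLT for the recentred weighted array --- equivalently Hoeffding's combinatorial CLT for the permutation-statistic form which is exact at the boundary $\rho_i\equiv1$ --- then yields joint asymptotic normality of $\big(\tfrac{\log n}{\sqrt n}l^*_{n1}(\theta),\tfrac1{\sqrt n}l^*_{n2}(\theta),\tfrac1{\sqrt n}l^*_{n3}(\theta)\big)$ with covariance $\Sigma^*$, the $(1,2)$ and $(1,3)$ entries vanishing because the first coordinate sits at the slower scale; the integrals $\int_0^1(s^\gamma-\tfrac1{1+\gamma})^2\,ds$, etc., together with the universal factor $2=1+\rho_i^2|_{\rho_i=1}$, produce $\sigma^*_{22},\sigma^*_{23},\sigma^*_{33}$, while the extra pieces in $\sigma^*_{11}$ come from the $m(i/n)/\log n$ deviations.

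Finally I would assemble the pieces. From $\hat\theta^*-\theta=-J_n^{-1}l_n^*(\theta)$, $\frac{\log n}{n}J_n\overset{p}{\to}\Delta^*$ and Slutsky's theorem,
$\mathrm{diag}\!\big(\sqrt n,\tfrac{\sqrt n}{\log n},\tfrac{\sqrt n}{\log n}\big)\hat\Delta^*(\hat\theta^*-\theta)
=-\mathrm{diag}\!\big(\tfrac{\log n}{\sqrt n},\tfrac1{\sqrt n},\tfrac1{\sqrt n}\big)l_n^*(\theta)\,(1+o_p(1))\overset{d}{\to}N(0,\Sigma^*)$,
which is (\ref{th2a-1}); and (\ref{th2a-2}) follows at once by left-multiplying $\hat\Delta^*(\hat\theta^*-\theta)$ by $(\hat\Delta^*)^{-1}$ (which converges to $(\Delta^*)^{-1}$ by consistency of $\hat\theta^*$) and then by $\tfrac{\sqrt n}{\log n}$, since $\mathrm{diag}(\tfrac1{\log n},1,1)$ kills the first coordinate in the limit, turning $\Sigma^*$ into $\Sigma_0^*$. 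I expect the main obstacle to be the third step: making the Bahadur--H\'ajek linearization of the normal-scores rank statistic rigorous uniformly in $i$ as $\rho_i\to1$, and tracking exactly how the recentring induced by the empirical marginals both yields $\sigma^*_{22},\sigma^*_{23},\sigma^*_{33}$ and forces the degeneracy in $\Sigma_0^*$ with the faster $\sqrt n$ rate in the first coordinate of $\hat\Delta^*(\hat\theta^*-\theta)$.
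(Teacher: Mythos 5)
Your proposal follows essentially the same route as the paper's proof: a mean-value/Taylor linearization of the score equations (\ref{score*}), convergence of the normalized Jacobian to $\Delta^*$ (the paper's (\ref{pfth2a-12})), and a H\'ajek-projection representation of the scores followed by a triangular-array CLT. You correctly identify the central mechanism: the empirical-marginal correction effectively recentres the weights, so the constant-weight score $l^*_{n1}$ degenerates at the $\sqrt n$ scale and its residual fluctuation of order $\sqrt n/\log n$ is driven by $1-\rho_i=m(i/n)/\log n$, which is exactly where $\sigma_{11}^*$ comes from; and your recentred-weight variance formula with the factor $1+\rho_i^2\to 2$ reproduces $\sigma^*_{22},\sigma^*_{23},\sigma^*_{33}$ as in the paper's (\ref{pfth2a-6})--(\ref{pfth2a-10}). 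You are in fact more explicit than the paper about controlling the unbounded score $\Phi^-$ near the boundary, which the paper leaves implicit. One small inaccuracy of description: by the identity $\int_0^1\Phi^-(v)c(u,v;\rho)\,dv=\rho\,\Phi^-(u)$ the projection term is not literally ``subtract the sample mean'' but equals $-\bar\rho\,\{(\Phi^-(U_i))^2+(\Phi^-(V_i))^2-2\}/2$ per observation (cf.\ (\ref{pfth2a-1})); this agrees with $-\Phi^-(U_i)\Phi^-(V_i)$ only to first order as $\rho_i\to 1$, though your variance bookkeeping is consistent with the exact computation.

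The one step that does not hold up as stated is your justification of $\sigma^*_{12}=\sigma^*_{13}=0$ ``because the first coordinate sits at the slower scale.'' Differing scales do not by themselves kill an asymptotic cross-correlation: if $X_n=Z/\log n$ and $Y_n=Z$, then $(\log n)X_n$ and $Y_n$ remain perfectly correlated after their natural normalizations. What is actually needed is the cancellation $\frac{\log n}{n}\sum_{i=1}^n\E(\bar Z_i\bar Z_i^*)=o(1)$ (and likewise for $\bar Z_i^{**}$), which the paper obtains in (\ref{pfth2a-9}) by an explicit moment computation: the leading term of each cross-covariance vanishes because the centred constant weight is identically zero, and the remaining $O(1/\log n)$ contributions must be shown to average out. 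This is routine given the machinery you already set up, but it has to be computed rather than inferred from the rates alone.
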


\begin{remark}\label{rem0}
 Since $\sigma_{22}<\sigma_{22}^*/(12\pi^2)$ and
$\sigma_{33}<\sigma_{33}^*/(12\pi^2)$, $\hat\beta$ and $\hat\gamma$ have  a smaller asymptotic variance  than $\hat\beta^*$ and $\hat\gamma^*$, respectively, while the comparison for the asymptotic variances of $\hat\alpha$ and $\hat\alpha^*$ is unclear since both $\frac{\sqrt n}{\log n}(\hat\alpha-\alpha)$ and $\frac{\sqrt n}{\log n}(\hat\alpha^*-\alpha)$ converge in distribution to zero.
On the other hand, if one is interested in estimating $\Delta (\alpha, \beta, \gamma)^T$, 
then the estimator for the first element based on $(\hat{\alpha}^*, \hat{\beta}^*, \hat{\gamma}^*)^T$ has a faster rate of convergence 
than the corresponding estimator based on $(\hat{\alpha}, \hat{\beta}, \hat{\gamma})^T$, 
but the estimators for the second and third elements based on $(\hat{\alpha}^*, \hat{\beta}^*, \hat{\gamma}^*)^T$ 
have a larger asymptotic variance than those based on 
$(\hat{\alpha}, \hat{\beta}, \hat{\gamma})$.  
In spite of these theoretical comparisons, 
the simulation study below  does prefer the estimation procedure based 
on equation (\ref{spearman}) when the mean squared error is concerned.
For testing $(\alpha, \beta,
\gamma)^T=(\alpha_0,\beta_0,\gamma_0)^T$, one should employ the
well-known Hotelling $T^2$ test statistic based on either
(\ref{th2-1}) or (\ref{th2a-1}) because the limit in both (\ref{th2-2}) and (\ref{th2a-2}) is
degenerate.
\end{remark}

Another interesting parametric form for $m(s)$ is polynomial. Here
we consider $m(s)=\alpha+\beta s$. In this case,  when $\beta=0$,
$m(s)$ becomes constant, which means that the observations
$(X_1,Y_1),\cdots,(X_n,Y_n)$ are independent and identically
distributed random vectors.

\begin{theorem}\label{th3} Suppose (\ref{cond1}) holds with $m(s)=\alpha+\beta s$ for some $\alpha>0, \beta\in R$. Then we have
\[\left(\frac{\sqrt n}{(\log n)^{3/4}}(\hat\alpha+\frac{\hat\beta}2-\alpha-\frac{\beta}2), \frac{\sqrt n}{\log n}(\frac{\hat\alpha}2+\frac{\hat\beta}3-\frac{\alpha}2-\frac{\beta}3)\right)^T
\overset{d}{\to} N\Big(0, \tilde\Sigma\Big),\] where
$\tilde\Sigma=12\pi^2(\tilde\sigma_{ij})$ is a symmetric matrix with
\[\tilde\sigma_{11}=\frac{2\sqrt 2}{3\beta}((\alpha+\beta)^{3/2}-\alpha^{3/2})\int_0^1(u-\frac 12)^2\phi(\Phi^-(u))\,du,\quad \tilde\sigma_{22}=\frac 1{2160}, \quad\tilde\sigma_{12}=0.\]
\end{theorem}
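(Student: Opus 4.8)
\noindent\emph{Proposed approach.} The plan is to treat Theorem~\ref{th3} as the linear specialisation ($m(s)=\alpha+\beta s$, i.e.\ $\gamma=1$ known) of Theorem~\ref{th2}, and to argue as in the proof of that theorem with the third score equation removed. Thus $(\hat\alpha,\hat\beta)$ is taken to solve the $2\times 2$ system consisting of $l_{n1}(\alpha,\beta)=0$ and $l_{n2}(\alpha,\beta)=0$ from (\ref{score}) with $\gamma$ fixed at $1$ and $\rho_i=1-(\alpha+\beta i/n)/\log n$. First I would establish consistency: using (\ref{spearman}) and $\rho_i\to1$ uniformly in $i$, one shows $\frac{\log n}{n}l_{n1}(\alpha,\beta)\overset{p}{\to}\frac{\sqrt3}{6\pi}\bigl[(\alpha+\tfrac\beta2)-(\alpha_0+\tfrac{\beta_0}2)\bigr]$ and $\frac{\log n}{n}l_{n2}(\alpha,\beta)\overset{p}{\to}\frac{\sqrt3}{6\pi}\bigl[(\tfrac\alpha2+\tfrac\beta3)-(\tfrac{\alpha_0}2+\tfrac{\beta_0}3)\bigr]$ (locally uniformly), where $(\alpha_0,\beta_0)$ is the truth; since the only common zero of these limits is $(\alpha_0,\beta_0)$, standard $Z$-estimator arguments give $(\hat\alpha,\hat\beta)\overset{p}{\to}(\alpha_0,\beta_0)$.

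Next I would Taylor-expand $0=l_{nk}(\hat\alpha,\hat\beta)=l_{nk}(\alpha_0,\beta_0)+\sum_{j}\partial_{\theta_j}l_{nk}(\theta_0)(\hat\theta_j-\theta_{0,j})+o_p(\cdot)$ for $k=1,2$. Discarding the factor $\tfrac{d}{d\rho}\arcsin(\rho/2)\to1/\sqrt3$ (uniform in $i$, as noted after (\ref{score*})) and using $\tfrac1n\sum_i(i/n)^a\to\int_0^1 s^a\,ds$ for $a=0,1,2$, the Jacobian equals $\tfrac{n}{\log n}\cdot\tfrac{\sqrt3}{6\pi}H+o_p(\tfrac{n}{\log n})$, where $H$ is the $2\times 2$ matrix with entries $\int_0^1 s^{j+k}\,ds$, $j,k\in\{0,1\}$, that is
\[
H=\begin{pmatrix}1&\tfrac12\\[3pt]\tfrac12&\tfrac13\end{pmatrix},\qquad \det H=\tfrac1{12}\neq0;
\]
inverting gives
\[
H\begin{pmatrix}\hat\alpha-\alpha_0\\ \hat\beta-\beta_0\end{pmatrix}=-\frac{6\pi}{\sqrt3}\cdot\frac{\log n}{n}\begin{pmatrix}l_{n1}(\alpha_0,\beta_0)\\ l_{n2}(\alpha_0,\beta_0)\end{pmatrix}\bigl(1+o_p(1)\bigr).
\]
The left-hand side is exactly $\bigl(\hat\alpha+\tfrac{\hat\beta}2-\alpha_0-\tfrac{\beta_0}2,\ \tfrac{\hat\alpha}2+\tfrac{\hat\beta}3-\tfrac{\alpha_0}2-\tfrac{\beta_0}3\bigr)^T$, so it remains to show that $\tfrac{(\log n)^{1/4}}{\sqrt n}\,l_{n1}(\alpha_0,\beta_0)$ and $\tfrac1{\sqrt n}\,l_{n2}(\alpha_0,\beta_0)$ are jointly asymptotically normal with mean $0$, asymptotically independent, and with limiting variances $\tilde\sigma_{11}$ and $\tilde\sigma_{22}$; the factor $12\pi^2=(6\pi/\sqrt3)^2$ then produces $\tilde\Sigma$.

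This last step is the core, and is carried out exactly as in the proof of Theorem~\ref{th2}. The key tool is the algebraic identity
\[
\sum_{i=1}^n\bigl(\hat F_1(X_i)-\tfrac12\bigr)\bigl(\hat F_2(Y_i)-\tfrac12\bigr)=\frac{1}{(n+1)^2}\Bigl(\frac{n(n^2-1)}{12}-\frac12\sum_{i=1}^n(R_i-S_i)^2\Bigr),
\]
$R_i,S_i$ being the ranks of $X_i,Y_i$, which shows that $l_{n1}(\alpha_0,\beta_0)$ equals, up to a nonrandom term, $-\tfrac1{2(n+1)^2}\sum_i(R_i-S_i)^2$. Although replacing $\hat F_j$ by $F_j$ would suggest a fluctuation of the naive order $\sqrt n$, the rank structure---manifest in the identity, in which only $\sum_i(R_i-S_i)^2$ is random---forces this to cancel; the Hoeffding/H\'ajek decomposition of the heterogeneous two-sample rank statistic $\sum_i(R_i-S_i)^2$ under $\rho_i\to1$ then leaves a variance of order $\sum_i\sqrt{1-\rho_i}=\sum_i\sqrt{m(i/n)/\log n}$, which is why the normalisation is $(\log n)^{1/4}/\sqrt n$ and why $\tilde\sigma_{11}$ carries the Riemann-integral factor $\int_0^1\sqrt{m(s)}\,ds=\tfrac2{3\beta}\bigl((\alpha+\beta)^{3/2}-\alpha^{3/2}\bigr)$, the remaining factor $\int_0^1(u-\tfrac12)^2\phi(\Phi^{-1}(u))\,du$ coming from the linearisation of $\hat F_j$ about $F_j$; these are precisely the two ingredients of $\sigma_{11}$ in Theorem~\ref{th2} at $\gamma=1$. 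For $l_{n2}$ the extra weight $i/n$ destroys the cancellation, so its variance is of the usual order $n$, and a direct second-moment computation yields $\tilde\sigma_{22}=1/2160$ $\bigl(=\mathrm{Var}((U-\tfrac12)^2)\times\tfrac1{12}$, which is $\sigma_{22}$ at $\gamma=1\bigr)$; the leading (projection) parts of $l_{n1}$ and $l_{n2}$ turn out to be asymptotically orthogonal, giving $\tilde\sigma_{12}=0$. Joint asymptotic normality then follows from a Lyapunov--Lindeberg central limit theorem for the resulting triangular arrays of independent summands, after checking that the higher-order terms of the decomposition and the empirical-process remainder are negligible at the respective rates. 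The boundary value $\beta=0$ (constant $m$, i.i.d.\ observations) is covered by continuity, since $\tfrac2{3\beta}((\alpha+\beta)^{3/2}-\alpha^{3/2})\to\sqrt\alpha=\int_0^1\sqrt{m(s)}\,ds$ as $\beta\to0$.

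I expect the main obstacle to be this degenerate-variance phenomenon for $l_{n1}$: one must control the Hoeffding decomposition of $\sum_i(R_i-S_i)^2$ accurately enough to extract both the nonstandard rate $(\log n)^{-1/4}$ and the exact constant $\tilde\sigma_{11}$, while simultaneously bounding the remainder from replacing $\hat F_1,\hat F_2$ by $F_1,F_2$; establishing joint (rather than merely marginal) convergence, with the vanishing off-diagonal entry, is a secondary difficulty.
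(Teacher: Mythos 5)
Your proposal is correct and follows essentially the same route as the paper, whose entire proof of Theorem~\ref{th3} is the remark that it ``follows from the proof of Theorem~\ref{th2} with known $\gamma=1$'': you correctly reduce to the two score equations $l_{n1},l_{n2}$ at $\gamma=1$, identify the $2\times2$ Jacobian $\tfrac{\sqrt3}{6\pi}H$ whose rows produce exactly the linear combinations $\hat\alpha+\tfrac{\hat\beta}{2}$ and $\tfrac{\hat\alpha}{2}+\tfrac{\hat\beta}{3}$, and recover $\tilde\sigma_{11},\tilde\sigma_{22},\tilde\sigma_{12}$ and the factor $12\pi^2=(6\pi/\sqrt3)^2$ by specializing the limits (\ref{pfth2-12}), (\ref{pfth2-16}) and the cross-moment computations of Theorem~\ref{th2}'s proof. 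Your Spearman-rank identity is a nice complementary way to see why $l_{n1}$ is degenerate at $\rho_i\equiv1$ (the paper makes the same point after (\ref{pfth2-12}) by noting $l_{n1}$ becomes nonrandom there), but the actual variance extraction you invoke is the paper's $\tilde Z_{i,1}+\tilde Z_{i,2}+\tilde Z_{i,3}$ decomposition, so the argument is the same in substance.
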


\begin{theorem}\label{th3a} Suppose (\ref{cond1}) holds with $m(s)=\alpha+\beta s$ for some $\alpha>0, \beta\in R$. Then we have
\[\left(\sqrt n(\hat\alpha^*+\frac{\hat\beta^*}2-\alpha-\frac{\beta}2), \frac{\sqrt n}{\log n}(\frac{\hat\alpha^*}2+\frac{\hat\beta^*}3-\frac{\alpha}2-\frac{\beta}3)\right)^T
\overset{d}{\to} N(0,\tilde\Sigma^*),\] where
$\tilde\Sigma^*=(\tilde\sigma_{ij}^*)$ is a symmetric matrix with
\[\tilde\sigma_{11}^*=4\alpha^2+4\alpha\beta+\frac{7\beta^2}{6},\quad \tilde\sigma_{22}^*=\frac 1{6}, \quad\tilde\sigma_{12}^*=0.\]
\end{theorem}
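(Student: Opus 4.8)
The plan is to follow the proofs of Theorems~\ref{th2a} and \ref{th3}, specialized to the two-parameter linear model ($\gamma$ fixed at $1$). Write $\hat\xi_i=\Phi^-(\hat F_1(X_i))$, $\hat\eta_i=\Phi^-(\hat F_2(Y_i))$, $\xi_i=\Phi^-(F_1(X_i))$, $\eta_i=\Phi^-(F_2(Y_i))$, $W_i=\hat\xi_i\hat\eta_i-\rho_i$, $\mu_1=\int_0^1 m(s)\,ds=\alpha+\tfrac{\beta}{2}$, $\mu_2=\int_0^1 s\,m(s)\,ds=\tfrac{\alpha}{2}+\tfrac{\beta}{3}$, and $p_k=n^{-1}\sum_{i=1}^n(i/n)^k$ (so $p_0=1$, $p_1=\tfrac12+O(1/n)$, $p_2=\tfrac13+O(1/n)$). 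Using $\hat\xi_i\hat\eta_i-1=W_i-(\alpha+\beta i/n)/\log n$ and subtracting the identity $\sum_i(1-\rho_i)=\sum_i(\alpha+\beta i/n)/\log n$ for the true parameter, the score system $l_{n1}^*=l_{n2}^*=0$ with $\gamma=1$ becomes the \emph{exact} linear system
\[
\begin{pmatrix}p_0 & p_1\\ p_1 & p_2\end{pmatrix}\begin{pmatrix}\hat\alpha^*-\alpha\\ \hat\beta^*-\beta\end{pmatrix}=-\frac{\log n}{n}\begin{pmatrix}\sum_{i=1}^n W_i\\ \sum_{i=1}^n (i/n)W_i\end{pmatrix}.
\]
Once Step~2 establishes $\sum_i W_i=O_p(\sqrt n/\log n)$ and $\sum_i(i/n)W_i=O_p(\sqrt n)$, inverting gives $\hat\alpha^*-\alpha,\hat\beta^*-\beta=O_p(\log n/\sqrt n)$, so the $O(1/n)$ slack in $p_1,p_2$ contributes only $o_p(1)$ after scaling and the two rows read off as $\sqrt n(\hat\alpha^*+\tfrac{\hat\beta^*}{2}-\alpha-\tfrac{\beta}{2})=-\tfrac{\log n}{\sqrt n}\sum_i W_i+o_p(1)$ and $\tfrac{\sqrt n}{\log n}(\tfrac{\hat\alpha^*}{2}+\tfrac{\hat\beta^*}{3}-\tfrac{\alpha}{2}-\tfrac{\beta}{3})=-\tfrac1{\sqrt n}\sum_i(i/n)W_i+o_p(1)$. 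Thus the statement reduces to the joint limit of $\bigl(-\tfrac{\log n}{\sqrt n}\sum_i W_i,\ -\tfrac1{\sqrt n}\sum_i(i/n)W_i\bigr)$.

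Step~2 is the estimated-margin analysis. Write $W_i=(\xi_i\eta_i-\rho_i)+\eta_i\Delta_i^X+\xi_i\Delta_i^Y+\Delta_i^X\Delta_i^Y$ with $\Delta_i^X=\hat\xi_i-\xi_i=(\hat F_1(X_i)-F_1(X_i))/\phi(\xi_i)+\text{l.o.t.}$ and $\hat F_1(X_i)-F_1(X_i)=n^{-1}\sum_k(I(F_1(X_k)\le F_1(X_i))-F_1(X_i))+O(1/n)$ (and symmetrically for $\Delta_i^Y$). Drop $\Delta_i^X\Delta_i^Y$; in $\sum_i\eta_i\Delta_i^X$ and $\sum_i\xi_i\Delta_i^Y$ replace $\eta_i,\xi_i$ by $\xi_i,\eta_i$ up to the ``second-order'' terms $\mathcal S^X:=\sum_i\frac{\eta_i-\xi_i}{\phi(\xi_i)}(\hat F_1(X_i)-F_1(X_i))$ and $\mathcal S^Y:=\sum_i\frac{\xi_i-\eta_i}{\phi(\eta_i)}(\hat F_2(Y_i)-F_2(Y_i))$; the H\'ajek projections of the remaining main parts are $\tfrac12(n-\sum_k\xi_k^2)+O_p(1)$ and $\tfrac12(n-\sum_k\eta_k^2)+O_p(1)$, using $\int_0^1\frac{\Phi^-(u)}{\phi(\Phi^-(u))}(I(v\le u)-u)\,du=\tfrac12(1-\Phi^-(v)^2)$. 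The algebraic identity $\xi_i\eta_i+1-\tfrac12\xi_i^2-\tfrac12\eta_i^2-\rho_i=(1-\rho_i)-\tfrac12(\xi_i-\eta_i)^2$ then yields
\[
\sum_{i=1}^n W_i=\sum_{i=1}^n\Bigl[(1-\rho_i)-\tfrac12(\xi_i-\eta_i)^2\Bigr]+\mathcal S^X+\mathcal S^Y+o_p\!\bigl(\sqrt n/\log n\bigr),
\]
in which the first sum has independent mean-zero summands of variance $2(1-\rho_i)^2=2m(i/n)^2/(\log n)^2$, and $\mathcal S^Y$ --- whose conditional mean given $F_2(Y_1),\dots,F_2(Y_n)$ is $-\sum_i(1-\rho_i)\frac{\eta_i}{\phi(\eta_i)}(\hat F_2(Y_i)-F_2(Y_i))$, an empirical-process statistic with the $O(1/\log n)$ weight $1-\rho_i$ --- H\'ajek-projects to $\tfrac{\mu_1}{2\log n}(\sum_k\eta_k^2-n)+o_p(\sqrt n/\log n)$, and likewise $\mathcal S^X\approx\tfrac{\mu_1}{2\log n}(\sum_k\xi_k^2-n)$, these being asymptotically perfectly correlated since $\sum_k\xi_k^2-\sum_k\eta_k^2=o_p(\sqrt n)$. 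This is precisely why $\sum_i W_i$ is only of order $\sqrt n/\log n$, whereas for $\sum_i(i/n)W_i$ the same algebra does not collapse and one is left with the full-order-$\sqrt n$ terms $\sum_i(i/n)(\xi_i\eta_i-\rho_i)$ and $-\tfrac14\sum_k(\xi_k^2+\eta_k^2-2n)$. The rigorous content is uniform control of the linearization and the H\'ajek remainders over the $i$-dependent copulas with $\rho_i\to1$.

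A Lindeberg--Feller CLT for the resulting triangular arrays gives the joint asymptotic normality, and Slutsky with the display in Step~1 produces the theorem; $\tilde\Sigma^*$ is then evaluated with Isserlis moment identities for bivariate normals together with $\rho_i\to1$. For $\tilde\sigma_{11}^*$: $\tfrac{(\log n)^2}{n}\mathrm{Var}(\sum_i W_i)=\tfrac{(\log n)^2}{n}\bigl[\sum_i 2(1-\rho_i)^2+\mathrm{Var}(\mathcal S^X+\mathcal S^Y)\bigr]+o(1)$, and since $\mathrm{Var}(\mathcal S^X+\mathcal S^Y)\sim4\,\mathrm{Var}(\mathcal S^Y)\sim\tfrac{2n\mu_1^2}{(\log n)^2}$ this tends to $2\int_0^1 m(s)^2\,ds+2\mu_1^2=4\alpha^2+4\alpha\beta+\tfrac76\beta^2$. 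For $\tilde\sigma_{22}^*$: $\tfrac1n\mathrm{Var}(\sum_i(i/n)W_i)\to$ the normalized variance of $\sum_i(i/n)(\xi_i\eta_i-\rho_i)-\tfrac14(\sum_k\xi_k^2+\sum_k\eta_k^2)$, which by the moment identities equals $2\int_0^1 s^2\,ds+\tfrac12-1=\tfrac16$. Finally $\tilde\sigma_{12}^*=0$ because $\mathrm{Cov}(\sum_i W_i,\sum_i(i/n)W_i)=O(n/(\log n)^2)$ --- the potentially order-$n/\log n$ cross terms cancel and the rest carries an extra $(1-\rho_i)^2=O((\log n)^{-2})$ factor --- so $\tfrac{\log n}{n}$ times this covariance vanishes.

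The main obstacle is Step~2: making the H\'ajek linearization of the estimated-margin normal-scores statistic rigorous with \emph{uniform} control over the near-degenerate copulas $\rho_i\to1$, and correctly extracting the cancellations --- the near-total one (collapse of $\sum_i W_i$ to $\sum_i[(1-\rho_i)-\tfrac12(\xi_i-\eta_i)^2]+\mathcal S^X+\mathcal S^Y$, of order $\sqrt n/\log n$, which forces the two estimable combinations to be normalized differently) and the partial ones giving $\tfrac16$ and $0$ --- including the delicate contribution $2\mu_1^2$ that comes from the interaction of the idiosyncratic copula noise with the marginal empirical processes. The linear solve, the Slutsky step, the CLT, the Isserlis computations and the $p_1,p_2$ bookkeeping are routine.
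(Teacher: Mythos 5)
Your proposal is correct and follows essentially the same route as the paper, whose proof of Theorem~\ref{th3a} is simply the proof of Theorem~\ref{th2a} with $\gamma=1$ known: linearize the (now exactly linear) score system, H\'ajek-project the estimated-margin normal-scores statistic onto the marginal empirical processes, and apply a Lindeberg--Feller CLT. Your variance bookkeeping via the identity $\xi\eta-\tfrac12(\xi^2+\eta^2)=-\tfrac12(\xi-\eta)^2$ is an algebraically equivalent (and more transparent) repackaging of the paper's decomposition $\bar Z_i=\bar Z_{i,1}+\bar Z_{i,2}+\bar Z_{i,3}$ and its term-by-term moment computations, and it reproduces the same constants $4\alpha^2+4\alpha\beta+\tfrac76\beta^2$, $\tfrac16$ and $0$.
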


\begin{remark}\label{lem1}
When (\ref{cond1}) holds with $m(s)=\alpha$, we can show the rate of
convergence for $\hat\alpha^*$ is faster than the rate of
convergence for $\hat\alpha$. That is, the estimator based on
(\ref{rho}) is preferred to that based on (\ref{spearman}). However, the simulation study below  prefers the estimation procedure based on equation (\ref{spearman}) when the mean squared error is used as a criterion.
\end{remark}

\subsection{Nonparametric inference}

First we use (\ref{spearman}) to estimate the smooth function
$Q(s)=\frac 1{2\pi}\arcsin(\frac{1-m(s)/\log n}2)$
nonparametrically. Especially we consider the local linear estimator
$\hat Q(s)$ defined as
\[(\hat Q(s),\hat b)=\arg\min_{a,b}\sum_{i=1}^n \left((\hat F_1(X_i)-\frac 12)(\hat F_2(Y_i)-\frac12)-a-b(s-i/n)\right)k(\frac{s-i/n}h),\]
where $k$ is a kernel function and $h=h(n)\to 0$ is a bandwidth.
That is,
\[\hat Q(s)=\frac{\sum_{j=1}^nw_j(\hat F_1(X_j)-\frac 12)(\hat F_2(Y_j)-\frac 12)}{\sum_{j=1}^nw_j},\]
where $w_j=k(\frac{s-j/n}h)[s_{n,2}-(s-j/n)s_{n,1}]$,
$s_{n,l}=\sum_{j=1}^nk(\frac{s-j/n}h)(s-j/n)^l$. We refer to Fan and
Gijbels \cite{fan} for details. Therefore we can estimate $m(s)$ non parametrically by
\[\hat m(s)=\Big(1-2\sin(2\pi\hat Q(s))\Big)\log n.\]

\begin{theorem}\label{th4} Assume $k(s)$ is symmetric with support
$[-1, 1]$.  For a given $s\in (0,1 )$,  assume $m''(t)$ is
continuous at $s$, $h=h(n)\to 0$ and $\frac{h^2\sqrt{nh}}{\log n}\to
\lambda$ as $n\to\infty$. Then as $n\to\infty$ we have
\[\frac{\sqrt{nh}}{\log n}\Big(\hat m(s)-m(s)\Big)
\overset{d}{\to} N\left(\frac 12\lambda m^{\prime
\prime}(s)\int_{-1}^1t^2k(t)\,dt,~\frac{\pi^2}{15}\int_{-1}^1k^2(t)\,dt\right).\]
\end{theorem}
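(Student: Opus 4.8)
To prove Theorem~\ref{th4}, the plan is to reduce everything to standard local-linear regression asymptotics for the auxiliary estimator $\hat Q(s)$, viewing the equispaced points $i/n$ as the design (with uniform design density, so $f(s)=1$) and $Q$ as the regression function: writing $U_i=F_1(X_i)$, $V_i=F_2(Y_i)$, equation (\ref{spearman}) together with (\ref{cond1}) gives $\E[(U_i-\tfrac12)(V_i-\tfrac12)]=\tfrac1{2\pi}\arcsin(\rho_i/2)=Q(i/n)$. First I would link $\hat m$ and $\hat Q$ by a one-term delta expansion. Since $Q$ was defined so that $m(s)=(1-2\sin(2\pi Q(s)))\log n$ identically, we have $\hat m(s)-m(s)=-2\log n\{\sin(2\pi\hat Q(s))-\sin(2\pi Q(s))\}$; expanding the sine and using $2\pi Q(s)=\arcsin\!\big(\tfrac{1-m(s)/\log n}{2}\big)\to\arcsin(\tfrac12)=\pi/6$, hence $\cos(2\pi Q(s))\to\sqrt3/2$, the quadratic Taylor remainder is $O_P(\log n\,(\hat Q(s)-Q(s))^2)=O_P(\log n/(nh))$, which is killed by the factor $\sqrt{nh}/\log n$ (note $\sqrt{nh}/\log n=\lambda/h^2\to\infty$ when $\lambda>0$). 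This leaves
\[\frac{\sqrt{nh}}{\log n}\big(\hat m(s)-m(s)\big)=-2\sqrt3\,\pi\,(1+o(1))\,\sqrt{nh}\,\big(\hat Q(s)-Q(s)\big)+o_P(1),\]
so it is enough to find the limiting law of $\sqrt{nh}(\hat Q(s)-Q(s))$.

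For that I would first replace the plug-in marginals by the true ones. Let $\hat Q^0(s)$ be the version of $\hat Q(s)$ with $(\hat F_1(X_j),\hat F_2(Y_j))$ replaced by $(U_j,V_j)$. By the DKW inequality $\sup_x|\hat F_\ell(x)-F_\ell(x)|=O_P(n^{-1/2})$, so the summand differs by $O_P(n^{-1/2})$ uniformly in $j$; since $\sum_j|w_j|/|\sum_j w_j|=O(1)$ for an interior $s$ with $h\to0$, this gives $\hat Q(s)-\hat Q^0(s)=O_P(n^{-1/2})=o_P((nh)^{-1/2})$, negligible after the $\sqrt{nh}$ scaling. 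For $\hat Q^0(s)$ I would use the standard split $\hat Q^0(s)-Q(s)=\frac{\sum_j w_j\{Q(j/n)-Q(s)\}}{\sum_j w_j}+\frac{\sum_j w_j\varepsilon_j}{\sum_j w_j}$ with $\varepsilon_j=(U_j-\tfrac12)(V_j-\tfrac12)-Q(j/n)$, which is independent across $j$, bounded, and mean zero. In the deterministic part the local-linear construction cancels the linear term exactly, and a second-order Taylor expansion of $Q$ about $s$ is legitimate because $m''$ is continuous at $s$; differentiating $\arcsin\!\big(\tfrac{1-m(t)/\log n}{2}\big)$ twice (with $\arcsin'(\tfrac12)=2/\sqrt3$) yields $Q''(s)=-\tfrac{m''(s)}{2\pi\sqrt3\log n}+O((\log n)^{-2})$, so that, using $h^2\sqrt{nh}/\log n\to\lambda$,
\[\sqrt{nh}\,\big\{\E\hat Q^0(s)-Q(s)\big\}\;\longrightarrow\;-\frac{\lambda\,m''(s)}{4\pi\sqrt3}\int_{-1}^1 t^2k(t)\,dt.\]
For the stochastic part, $m$ being bounded near $s$ forces $\rho_j\to1$ uniformly over the window, whence $\V(\varepsilon_j)\to\V\big((U-\tfrac12)^2\big)=\int_0^1(u-\tfrac12)^4\,du-\big(\int_0^1(u-\tfrac12)^2\,du\big)^2=\tfrac1{80}-\tfrac1{144}=\tfrac1{180}$; a Riemann-sum computation with $f\equiv1$ gives $nh\,\V(\hat Q^0(s))\to\tfrac1{180}\int_{-1}^1k^2(t)\,dt$, and a Lindeberg CLT for the bounded independent triangular array $\{w_j\varepsilon_j\}$ gives $\sqrt{nh}\{\hat Q^0(s)-\E\hat Q^0(s)\}\overset{d}{\to}N\big(0,\tfrac1{180}\int_{-1}^1k^2(t)\,dt\big)$.

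Combining the two displays of the previous paragraph, $\sqrt{nh}(\hat Q(s)-Q(s))\overset{d}{\to}N\big(-\tfrac{\lambda m''(s)}{4\pi\sqrt3}\int_{-1}^1 t^2k(t)\,dt,\ \tfrac1{180}\int_{-1}^1k^2(t)\,dt\big)$, and multiplying by $-2\sqrt3\,\pi$ as in the first paragraph, using $12\pi^2/180=\pi^2/15$, gives exactly
\[\frac{\sqrt{nh}}{\log n}\big(\hat m(s)-m(s)\big)\overset{d}{\to}N\!\left(\tfrac12\lambda m''(s)\int_{-1}^1 t^2k(t)\,dt,\ \tfrac{\pi^2}{15}\int_{-1}^1 k^2(t)\,dt\right),\]
which is the assertion. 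The main difficulty I anticipate is that the effective regression function $Q$ is itself $n$-dependent, lying within $O((\log n)^{-1})$ of the constant $\tfrac1{12}$; because of this the bias term survives only thanks to $h^2\sqrt{nh}$ having exact order $\log n$ (so $h^2\sqrt{nh}\to\infty$), and one must check carefully that every next-order remainder is small at the correct scale --- of order $o(h^2/\log n)$ for the $\arcsin$ and Taylor expansions of $Q$ (so that it is killed by the $\sqrt{nh}$ factor, since $\sqrt{nh}\cdot o(h^2)$ would only be $o(\log n)$), and $o_P((nh)^{-1/2})$ for the plug-in correction. (One also uses the usual $nh\to\infty$ throughout, which is automatic when $\lambda>0$.)
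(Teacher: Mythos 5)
Your proposal is correct and follows essentially the same route as the paper: the delta-method link $\frac{\sqrt{nh}}{\log n}(\hat m(s)-m(s))=-4\pi\cos(2\pi Q(s))\sqrt{nh}(\hat Q(s)-Q(s))+o_p(1)$ with $(\log n)Q''(s)\to-\frac{\sqrt 3}{6\pi}m''(s)$, $\cos(2\pi Q(s))\to\frac{\sqrt3}{2}$ and residual variance $\frac1{80}-\frac1{144}=\frac1{180}$, followed by standard local-linear asymptotics for $\hat Q$. The only difference is that you spell out what the paper delegates to ``standard arguments'' (the DKW bound showing the plug-in marginals contribute only $O_P(n^{-1/2})=o_P((nh)^{-1/2})$ — correctly identifying why this effect is negligible here though not in the parametric theorems — plus the bias--variance split and the Lindeberg CLT), and all your constants check out.
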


Second we use (\ref{rho}) to estimate the smooth function $m(s)$
nonparametrically by considering the local linear estimator
\[(\hat m^*(s),\hat b)=\arg\min_{a,b}\sum_{i=1}^n \left(\Phi^-(\hat F_1(X_i))\Phi^-(\hat F_2(Y_i))-1+\frac{a}{\log n}+\frac{b}{\log n}(s-i/n)\right)k(\frac{s-i/n}h),\]
i.e.,
\[\hat m^*(s)=-\frac{\sum_{j=1}^nw_j(\Phi^-(\hat F_1(X_j))\Phi^-(\hat F_2(Y_j))-1)\log n}{\sum_{j=1}^nw_j}.\]

\begin{theorem}\label{th4a} Assume $k(s)$ is symmetric with support
$[-1, 1]$.  For a given $s\in (0,1 )$,  assume $m''(t)$ is
continuous at $s$, $h=h(n)\to 0$ and $\frac{h^2\sqrt{nh}}{\log n}\to
\lambda$ as $n\to\infty$. Then as $n\to\infty$ we have
\[\frac{\sqrt{nh}}{\log n}(\hat m^*(s)-m(s))\overset{d}{\to} N\left(\frac 12\lambda m''(s)\int_{-1}^1t^2k(t)\,dt,~2\int_{-1}^1k^2(t)\,dt\right).\]
\end{theorem}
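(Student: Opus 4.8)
The plan is to reduce the statement to a weighted central limit theorem for a sum of independent terms. Put $r(t):=\rho(t)-1=-m(t)/\log n$; by (\ref{rho}) we have $\E\big(\Phi^-(F_1(X_i))\Phi^-(F_2(Y_i))-1\big)=r(i/n)$. Reading off the displayed formula for $\hat m^*(s)$, one has $\hat m^*(s)=-(\log n)\,\hat r(s)$, where $\hat r(s)=\big(\sum_j w_j(\Phi^-(\hat F_1(X_j))\Phi^-(\hat F_2(Y_j))-1)\big)/\sum_j w_j$ is the ordinary local linear smoother, built on the regular design points $j/n$, of the responses $\Phi^-(\hat F_1(X_j))\Phi^-(\hat F_2(Y_j))-1$. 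Since $m(s)=-(\log n)\,r(s)$, it follows that $\frac{\sqrt{nh}}{\log n}\big(\hat m^*(s)-m(s)\big)=-\sqrt{nh}\,\big(\hat r(s)-r(s)\big)$, so it is enough to obtain the $\sqrt{nh}$-rate asymptotics of the local linear estimator $\hat r(s)$; the factor $\log n$ and the smallness of $r$ play no further role.

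First I would pass to the ``oracle'' smoother $\tilde r(s)$ obtained by replacing $\hat F_1,\hat F_2$ with the true marginals, i.e.\ using $Z_j:=\Phi^-(F_1(X_j))\Phi^-(F_2(Y_j))$. Since $\big(\Phi^-(F_1(X_j)),\Phi^-(F_2(Y_j))\big)$ is a standard bivariate normal pair with correlation $\rho_j$, we get $\E(Z_j)=\rho_j$ and, using $\E(U^2V^2)=1+2\rho^2$ for such a pair, $\V(Z_j)=1+\rho_j^2\to2$ uniformly over $j$ with $|s-j/n|\le h$ under (\ref{cond1}), while all moments of $Z_j$ stay bounded uniformly in $j$. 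Decompose $\tilde r(s)-r(s)$ into the usual deterministic bias term and a centered stochastic term. For the bias, Taylor-expand $r$ about $s$ (using continuity of $m''$, hence of $r''$, at $s$); standard Riemann-sum estimates on the regular design reduce the weighted-moment ratios to $\int_{-1}^1 t^\ell k(t)\,dt$ for $\ell=0,1,2$ (the $\ell=1$ term vanishing by symmetry of $k$), giving bias $=\frac12 h^2 r''(s)\int_{-1}^1 t^2k(t)\,dt+o(h^2)=-\frac{h^2}{2\log n}m''(s)\int_{-1}^1 t^2k(t)\,dt+o(h^2/\log n)$, so $-\sqrt{nh}$ times it tends to $\frac12\lambda m''(s)\int_{-1}^1 t^2k(t)\,dt$ by the assumption $h^2\sqrt{nh}/\log n\to\lambda$. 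The centered stochastic term $\sum_j w_j(Z_j-\rho_j)/\sum_j w_j$ is a weighted sum of independent, mean-zero variables with uniformly bounded $(2+\delta)$-moments; its variance is $\sum_j w_j^2\,\V(Z_j)/(\sum_j w_j)^2\sim \frac2{nh}\int_{-1}^1 k^2(t)\,dt$ (again by Riemann-sum evaluation, the design density being $1$), and the Lyapunov condition holds since $\max_j|w_j|/(\sum_j w_j^2)^{1/2}=O((nh)^{-1/2})\to0$. Hence $-\sqrt{nh}\,(\tilde r(s)-r(s))\overset{d}{\to}N\big(\frac12\lambda m''(s)\int_{-1}^1 t^2k(t)\,dt,\ 2\int_{-1}^1 k^2(t)\,dt\big)$.

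The main obstacle is to show that replacing $F_1,F_2$ by $\hat F_1,\hat F_2$ costs only $o_p((nh)^{-1/2})$, i.e.\ $\sqrt{nh}\,(\hat r(s)-\tilde r(s))=o_p(1)$. Conceptually this holds because estimating the marginals is a parametric-rate ($n^{-1/2}$) effect, negligible against the local nonparametric rate $(nh)^{-1/2}$ precisely because $h\to0$; the technical difficulty is that $\Phi^-$ blows up near $0$ and $1$, so the increments $\Phi^-(\hat F_1(X_j))-\Phi^-(F_1(X_j))$ are not uniformly $O_p(n^{-1/2})$. I would control them by the mean value theorem, writing $\Phi^-(\hat F_1(X_j))-\Phi^-(F_1(X_j))=(\hat F_1(X_j)-F_1(X_j))/\phi(\Phi^-(\xi_j))$, expressing $\hat F_1(X_j)-F_1(X_j)$ through the uniform empirical process $\alpha_n$ of $F_1(X_1),\dots,F_1(X_n)$ (the $X_i$ being i.i.d.\ $F_1$ irrespective of the $\rho_i$), and combining Mills-ratio control of $\phi(\Phi^-(u))$ near the endpoints with sharp weighted empirical-process inequalities for $\alpha_n(u)/\sqrt{u(1-u)}$; since the kernel averaging acts in the index direction $j/n$, which is independent of the rank direction, an H\'ajek-type linearization shows the leading correction is a weighted average of independent $O_p(n^{-1/2})$ terms, hence $o_p((nh)^{-1/2})$ after the smoothing, the product structure being reduced via $\Phi^-(\hat U)\Phi^-(\hat V)-\Phi^-(U)\Phi^-(V)=(\Phi^-(\hat U)-\Phi^-(U))\Phi^-(\hat V)+\Phi^-(U)(\Phi^-(\hat V)-\Phi^-(V))$ together with the crude bound $|\Phi^-(\hat F_1(X_j))|\le\Phi^-(n/(n+1))=O(\sqrt{\log n})$. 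This is essentially the estimate already needed for the rank-based Theorem \ref{th4}, and I would mirror that argument. Combining the oracle limit with this negligibility via Slutsky's theorem gives Theorem \ref{th4a}; the only substantive change from Theorem \ref{th4} is that the relevant per-observation variance here is $\V(\Phi^-(F_1(X_i))\Phi^-(F_2(Y_i)))\to2$ rather than $\V((F_1(X_i)-\frac12)(F_2(Y_i)-\frac12))\to\frac1{180}$ (the latter then inflated by the squared derivative $(2\sqrt3\,\pi\log n)^2$ of $u\mapsto(1-2\sin(2\pi u))\log n$ at the limiting argument $1/12$), which is why no delta-method step is needed and the variance constant is $2\int_{-1}^1 k^2$ instead of $\frac{\pi^2}{15}\int_{-1}^1 k^2$.
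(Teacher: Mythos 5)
Your proposal is correct and is essentially the fleshed-out version of what the paper intends, since the paper's own proof consists only of the phrase ``standard arguments in local linear estimation'': you correctly reduce to the oracle local linear smoother of $Z_j=\Phi^-(F_1(X_j))\Phi^-(F_2(Y_j))$ with per-observation variance $1+\rho_j^2\to2$, obtain the stated bias from $r''=-m''/\log n$ and $h^2\sqrt{nh}/\log n\to\lambda$, and verify the Lyapunov condition. You also correctly identify and resolve the one point the paper leaves implicit, namely that the marginal-estimation correction is a \emph{global} $O_p(n^{-1/2})$ term (a H\'ajek projection onto all $n$ observations) and hence $o_p((nh)^{-1/2})$ because $h\to0$ — in contrast to the parametric Theorem \ref{th2a}, where that same correction is of the same order as the main term and produces the near-cancellation that degrades the rate of $l_{n1}^*$.
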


\begin{remark}\label{rem2}
It follows from Theorems \ref{th4} and \ref{th4a} that both $\hat
m^*(s)$ and $\hat m(s)$ have the same asymptotic bias, but $\hat m(s)$ has a
smaller asymptotic variance than $\hat m^*(s)$. Hence, unlike parametric
estimation, nonparametric estimation based on (\ref{spearman}) is always
preferred.
\end{remark}

\begin{remark}\label{rem3}
By minimizing the asymptotic mean squared error,  the optimal
choices of $h$ for $\hat m(s)$ and $\hat m^*(s)$ are
\[h_0=\left(\frac{\log^2n}n\right)^{1/5}\left(\frac{\pi^2\int_{-1}^1k^2(t)\,dt}{15(m''(s)\int_{-1}^1t^2k(t)\,dt)^2}\right)^{1/5}\]
and\[
h_0^*=\left(\frac{\log^2n}n\right)^{1/5}\left(\frac{2\int_{-1}^1k^2(t)\,dt}{(m''(s)\int_{-1}^1t^2k(t)\,dt)^2}\right)^{1/5},
\]
respectively, which are different from the standard optimal  order
$n^{-1/5}$  in the bandwidth choice of nonparametric regression
estimation and nonparametric density estimation. Data driven method
for choosing the above $h_0$ and $h_0^*$ can be obtained via
estimating $m''(s)$. A future research is to  investigate the
possibility of using cross-validation method to choose the optimal
bandwidth.
\end{remark}

\begin{remark}\label{rem4}
It is straightforward to construct both parametric and nonparametric
estimation for the tail dependence function and the tail coefficient
given in Theorem \ref{th1} and to derive the corresponding
asymptotic results by using Theorems \ref{th2}-\ref{th4a}.
\end{remark}

\section{Simulation}
In this section we examine the finite sample performance of the
proposed estimators by drawing independent $(X_1,Y_1), \cdots,
(X_n,Y_n)$ with $(X_i,Y_i)$ following the normal copula with
correlation coefficient  $\rho=1-m(i/n)/\log n$. We consider $n=300$ or $1000$ or $3000$,
and repeat $1000$ times.

First we consider $m(s)=\alpha$ with $\alpha=1$ or $10$, and
calculate the average, sample variance and mean squared error for both $\hat\alpha$ and
$\hat\alpha^*$.
Table 1 below  shows that $\hat\alpha^*$ has a smaller variance than
$\hat\alpha$, which confirms the argument  mentioned in Remark 2 that estimator $\hat\alpha^*$
has a faster rate of convergence than $\hat\alpha$. We also observe from Table 1  that i) $\hat\alpha^*$ has a larger bias and a larger mean squared error than $\hat\alpha$ except the case of $\alpha=10$ and $n=3000$; ii) the  variance and mean squared error of both $\hat\alpha$
and $\hat\alpha^*$ become larger when $\alpha$ increases; iii)   the accuracy for both estimators improves as $n$ becomes larger. In conclusion, $\hat\alpha$ has an overall  better finite sample behavior in terms of mean squared error than $\hat\alpha^*$ although its asymptotic variance is larger theoretically and empirically.

\begin{table}[!h]
\label{T1}\caption{Estimators for the case of $m(s)=\alpha$.}
\vspace{0.1in}
\begin{center}
\begin{tabular}{ccccccc}
\hline
&$\alpha=1$&$\alpha=10$& $\alpha=1$ &$\alpha=10$ &$\alpha=1$ &$\alpha=10$ \\
&$n=300$&$n=300$&$n=1000$&$n=1000$&$n=3000$&$n=3000$\\
\hline
$\E(\hat\alpha)$&1.0365&9.9660&1.0145&9.9836&1.0065&9.9976\\
$\V(\hat\alpha)$&0.0144&0.0249&0.0045&0.0384&0.0016&0.0218\\
$\text{MSE}(\hat\alpha)$&0.0157&0.0203&0.0047&0.0387&0.0016&0.0218\\
\hline
$\E(\hat\alpha^*)$&1.1690&9.8440&1.0788&9.9476&1.0368&9.9914\\
$\V(\hat\alpha^*)$&0.0109&0.0191&0.0034&0.0322&0.0012&0.0193\\
$\text{MSE}(\hat\alpha^*)$&0.0395&0.0434&0.0096&0.0349&0.0026&0.0194\\
\hline
\end{tabular}
\end{center}
\end{table}

Next we consider the case of $m(s)=\alpha+\beta s$. In Table 2 we
report the average, sample variance and mean squared error for estimators $(\hat\alpha,
\hat\beta)$, $(\hat\alpha^*, \hat\beta^*)$,
$(\hat\alpha+\frac{\hat\beta}2,
\frac{\hat\alpha}2+\frac{\hat\beta}3)$ and
$(\hat\alpha^*+\frac{\hat\beta^*}2,
\frac{\hat\alpha^*}2+\frac{\hat\beta^*}3)$. As we see, estimators
$(\hat\alpha, \hat\beta)$ have a smaller variance than
$(\hat\alpha^*, \hat\beta^*)$, but
$\hat\alpha^*+\frac{\hat\beta^*}2$ has a smaller variance than
$\hat\alpha+\frac{\hat\beta}2$, which is supported by Theorems
\ref{th3} and \ref{th3a} that $\hat\alpha^*+\frac{\hat\beta^*}2$ has
a faster rate of convergence than $\hat\alpha+\frac{\hat\beta}2$.  As
$n$ becomes larger, the accuracy of all estimators improves. Since  $\hat\alpha$
and $\hat\beta$ have a smaller mean squared error than $\hat\alpha^*$ and $\hat\beta^*$, respectively, we prefer the estimation procedure based on equation (\ref{spearman}) to that based on equation (\ref{rho}).

\begin{table}
\label{T2}\caption{Estimators for the case of $m(s)=\alpha+\beta s$
with $\alpha=1$.} \vspace{0.1in}
\begin{center}
\begin{tabular}{ccccccc}
\hline
& $\beta=1$ &$\beta=0$& $\beta=1$ &$\beta=0$ &$\beta=1$ &$\beta=0$ \\
&$n=300$&$n=300$&$n=1000$&$n=1000$&$n=3000$&$n=3000$\\
\hline
$\E(\hat\alpha)$&1.0289&1.0270&1.0350&1.0266&1.0019&1.0000\\
$\V(\hat\alpha)$&0.2901&0.3230&0.1245&0.1327&0.0453&0.0528\\
$\text{MSE}(\hat\alpha)$&0.2909&0.3237&0.1257&0.1334&0.0453&0.0528\\
$\E(\hat\beta)$&1.0345&0.0486&0.9612&-0.0240&1.0022&0.0111\\
$\V(\hat\beta)$&1.1597&1.2678&0.5080&0.5095&0.1833&0.2082\\
$\text{MSE}(\hat\beta)$&1.1609&1.2702&0.5095&0.5101&0.1833&0.2083\\
\hline
$\E(\hat\alpha+\frac{\hat\beta}2)$&1.5461&1.0513&1.5030&1.0146&1.5030&1.0055\\
$\V(\hat\alpha+\frac{\hat\beta}2)$&0.0270&0.0150&0.0097&0.0047&0.0032&0.0015\\
$\text{MSE}(\hat\alpha+\frac{\hat\beta}2)$&0.0291&0.0176&0.0097&0.0049&0.0041&0.0015\\
$\E(\frac{\hat\alpha}2+\frac{\hat\beta}3)$&0.8593&0.5297&0.8379&0.5053&0.8350&0.5037\\
$\V(\frac{\hat\alpha}2+\frac{\hat\beta}3)$&0.0170&0.0131&0.0070&0.0047&0.0024&0.0019\\
$\text{MSE}(\frac{\hat\alpha}2+\frac{\hat\beta}3)$&0.0177&0.0140&0.0070&0.0047&0.0024&0.0019\\
\hline
$\E(\hat\alpha^*)$&1.1654&1.1557&1.1155&1.0880&1.0349&1.0292\\
$\V(\hat\alpha^*)$&0.4281&0.4482&0.2303&0.2246&0.0934&0.1121\\
$\text{MSE}(\hat\alpha^*)$&0.4555&0.4724&0.2436&0.2323&0.0946&0.1130\\
$\E(\hat\beta^*)$&0.9802&0.0338&0.9188&-0.0177&0.9931&0.0147\\
$\V(\hat\beta^*)$&1.6477&1.7563&0.9138&0.8838&0.3686&0.4504\\
$\text{MSE}(\hat\beta^*)$&1.6481&1.7575&0.9204&0.8841&0.3686&0.4506\\
\hline
$\E(\hat\alpha^*+\frac{\hat\beta^*}2)$&1.6555&1.1726&1.5749&1.0792&1.5315&1.0365\\
$\V(\hat\alpha^*+\frac{\hat\beta^*}2)$&0.0221&0.0112&0.0076&0.0037&0.0025&0.0012\\
$\text{MSE}(\hat\alpha^*+\frac{\hat\beta^*}2)$&0.0463&0.0410&0.0132&0.0100&0.0035&0.0025\\
$\E(\frac{\hat\alpha^*}2+\frac{\hat\beta^*}3)$&0.9094&0.5891&0.8640&0.5381&0.8485&0.5195\\
$\V(\frac{\hat\alpha^*}2+\frac{\hat\beta^*}3)$&0.0174&0.0152&0.0087&0.0071&0.0033&0.0036\\
$\text{MSE}(\frac{\hat\alpha^*}2+\frac{\hat\beta^*}3)$&0.0232&0.0231&0.0096&0.0086&0.0035&0.0040\\
\hline
\end{tabular}
\end{center}
\end{table}

Finally we consider the case of $m(s)=\alpha+\beta s^{\gamma}$. Given results in Tables 1 and 2, we only consider the estimators derived from equation (\ref{spearman}) with the large sample size $n=3000$. Table 3 shows that all estimators  have a rather large variance for $\gamma=1$, and the variance of $\hat\gamma$ is still quite big even when $\gamma=0.5$, which means estimating the shape parameter $\gamma$ is very challenging as usually.

\begin{table}[!h]
\label{T3}\caption{Estimators for the case of $m(s)=\alpha+\beta s^{\gamma}$ with $\alpha=\beta=1$.}
\vspace{0.1in}
\begin{center}
\begin{tabular}{cccc|ccc|ccc}
\hline
&$\E(\hat\alpha)$&$\V(\hat\alpha)$& $\text{MSE}(\hat\alpha)$&$\E(\hat\beta)$&$\V(\hat\beta)$&$\text{MSE}(\hat\beta)$&$\E(\hat\gamma)$&$\V(\hat\gamma)$&$\text{MSE}(\hat\gamma)$ \\
\hline
$\gamma=0.5$&0.8631&0.2050&0.2237&1.2268&0.2840&0.3354&0.9787&8.0693&8.2985\\
$\gamma=1$&0.9964&15.8532&15.8532&1.1412&16.2379&16.2578&1.7859&11.1177&11.7353\\
\hline
\end{tabular}
\end{center}
\end{table}

\section{Data Analysis}
In this section we apply the proposed nonparametric estimators to
two real data sets: Danish fire loss and log-returns of exchange
rates; see Figure 1.

This first data set is the nonzero losses to building and content in
the Danish fire insurance claims, which comprises 2167 fire losses
over the period 1980 to 1990. The second data set is the log-returns
of the exchange rates between Euro and US dollar and those between
British pound and US dollar from January 3, 2000 till December 19,
2007.
\begin{figure}[!h]
\centering
\includegraphics[width=7cm,height=4cm]{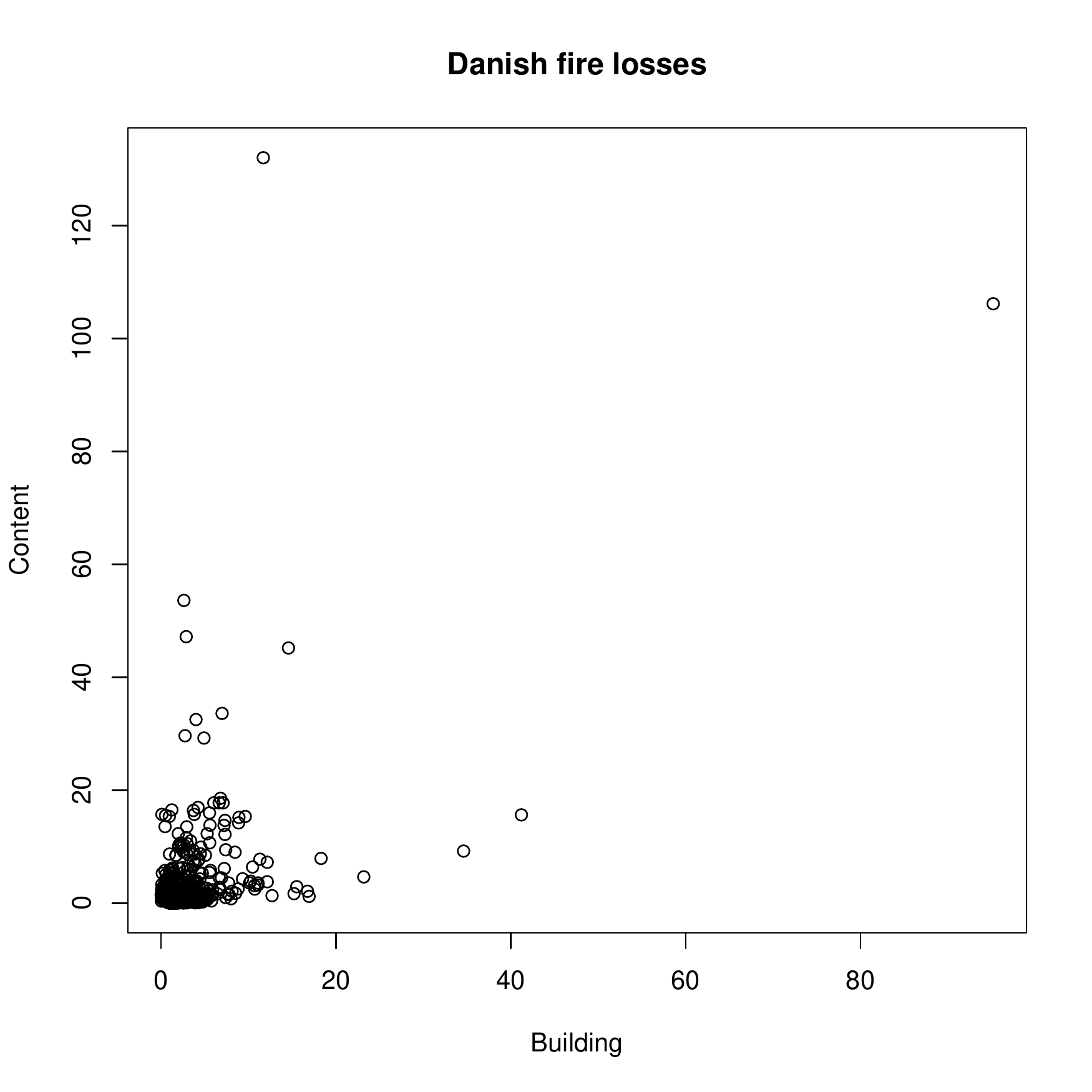}
\includegraphics[width=7cm,height=4cm]{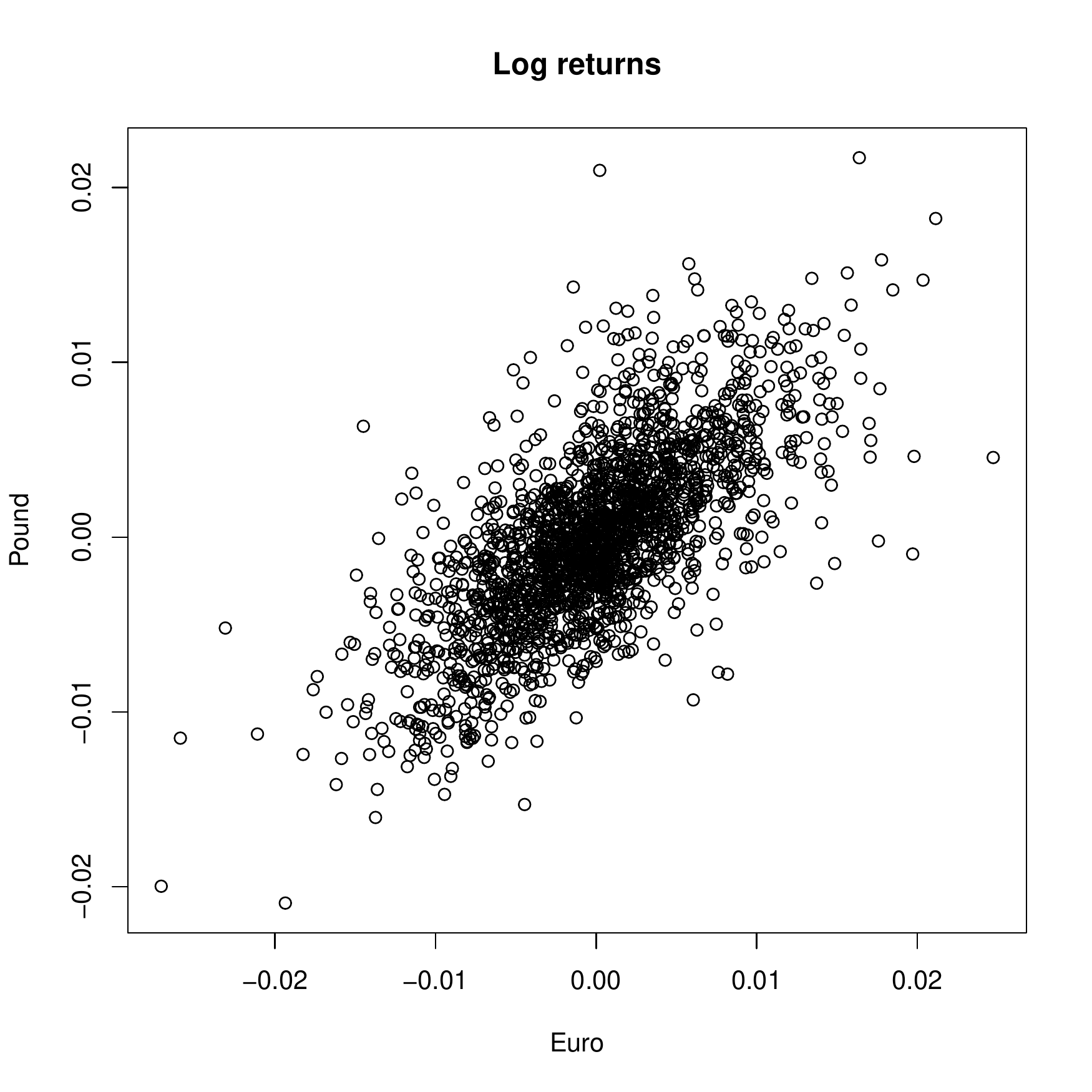}
\caption{Left panel: Danish fire loss with 2167 fire losses over the
period 1980 to 1990. Right panel: log-returns of exchange rates
between Euro and US dollar and those between British pound and US
dollar from January 3, 2000 till December 19, 2007.}
\end{figure}

We calculate both $\hat m(s)$ and $\hat m^*(s)$ for
$s=0.1,0.11,0.12,\cdots,0.9$ by using Epanechnikov kernel
$k(x)=\frac 34 (1-x^2)I(|x|\le 1)$ and the bandwidth
$h=d\{\log^2(n)/n\}^{1/5}$ with $d=0.2, 0.3, 0.4, 0.5$. From Figures
2 and 3, we observe that $\hat m(s)$ and $\hat m^*(s)$ have a quite
similar pattern for the second data set, but seem having a different
pattern for the first data set when a large bandwidth is employed. To further investigate this issue, we plot the difference of $\hat m(s)-\hat m^*(s)$ in Figure 4 for the above $h$ with $d=0.2, 0.3, 0.4, 0.5$, which indeed shows the differences for $d=0.4$ are quite similar to those for $d=0.5$.  Nevertheless, Remark 3 says that one should prefer $\hat m(s)$ to $\hat m^*(s)$. The non-constant $m(s)$ function indicates observations are not identically distributed.

\begin{figure}[!h]
\centering
\includegraphics[width=7cm,height=4cm]{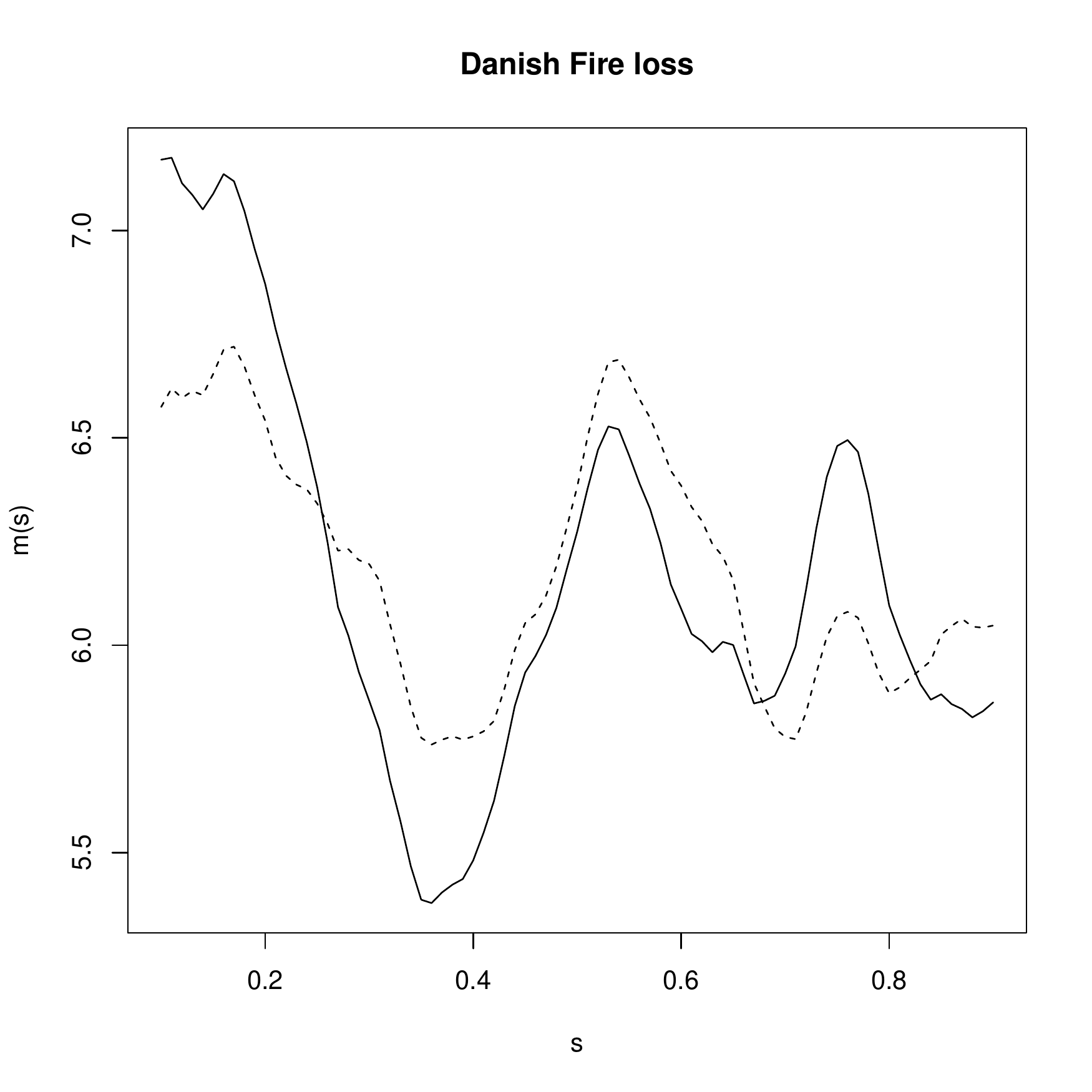}
\includegraphics[width=7cm,height=4cm]{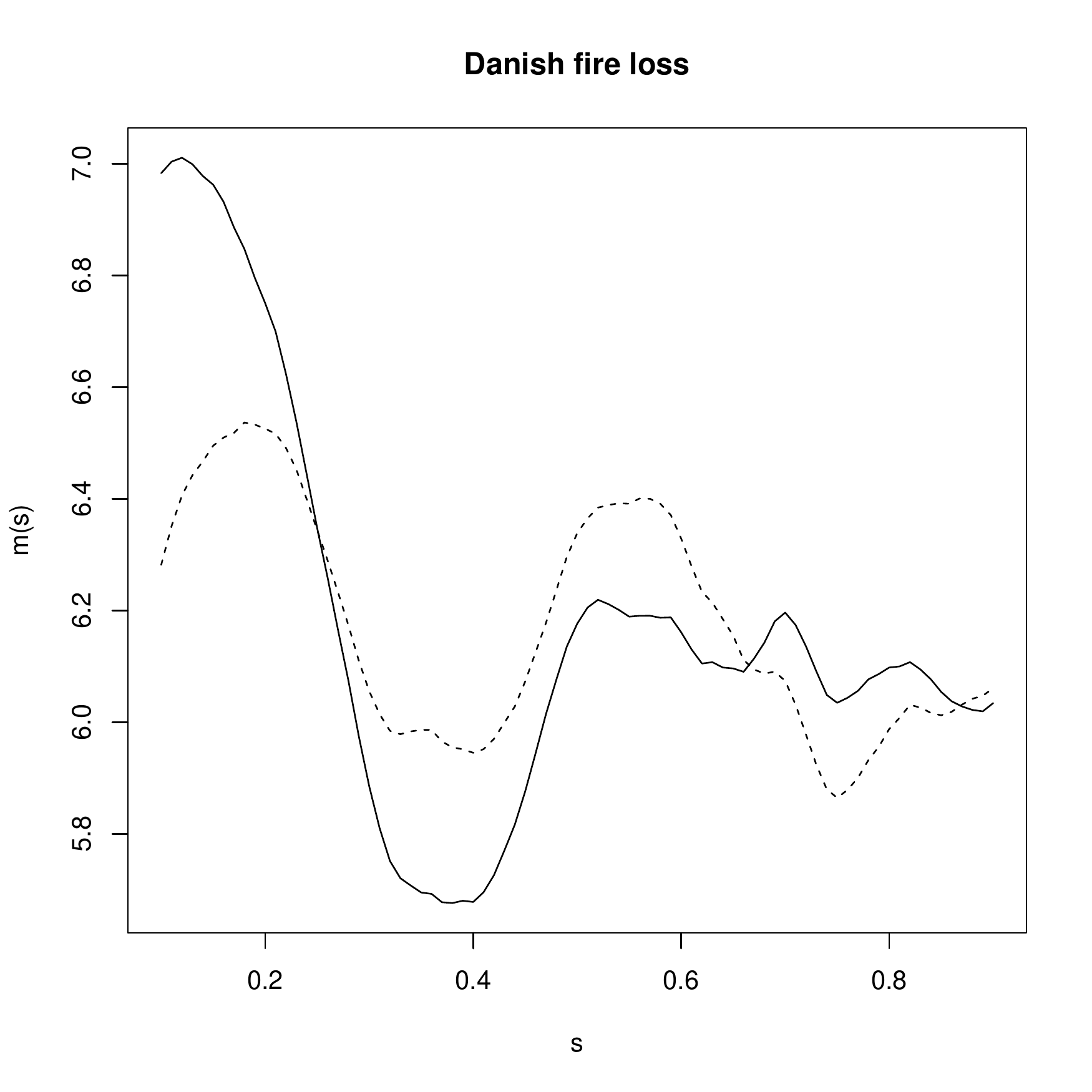}
\includegraphics[width=7cm,height=4cm]{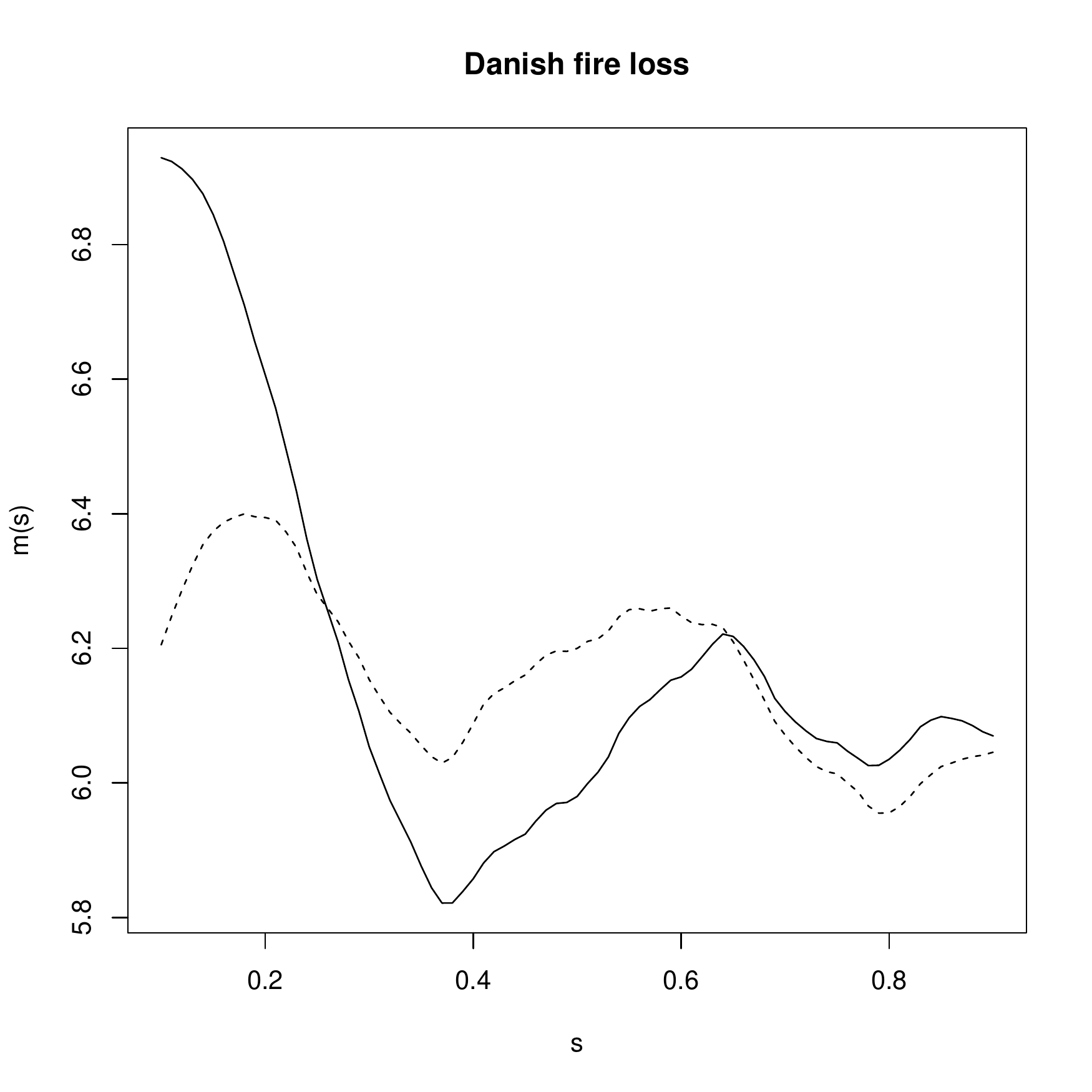}
\includegraphics[width=7cm,height=4cm]{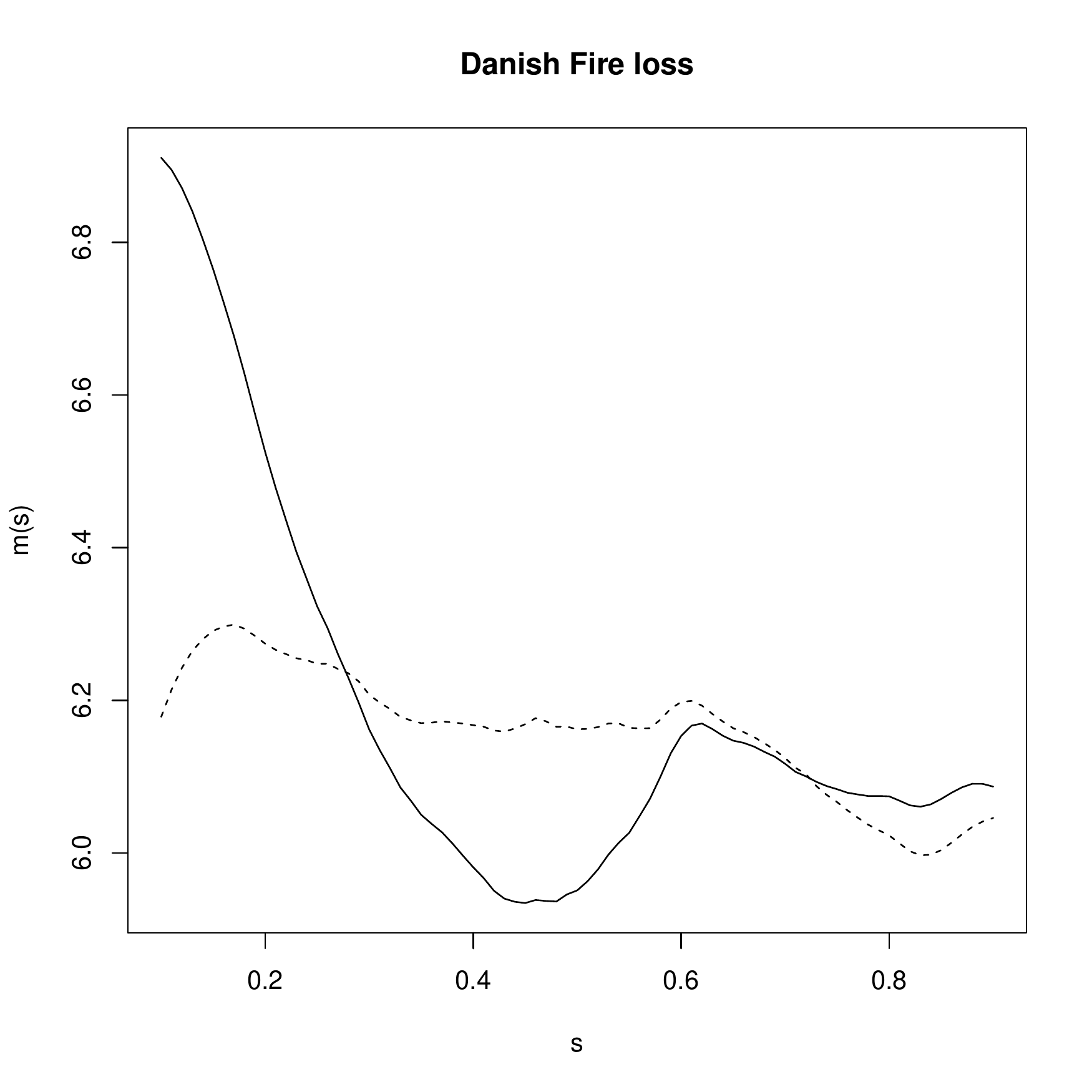}
\caption{Danish fire losses. Solid line and dotted line represent $\hat m(s)$ and $\hat
m^*(s)$, respectively. Bandwidth $h=d \{\log^2(n)/n\}^{1/5}$ with
$d=0.2, 0.3, 0.4, 0.5$ is employed in the upper left, upper right,
lower left, lower right panels, respectively.}
\end{figure}

\begin{figure}[!h]
\centering
\includegraphics[width=7cm,height=4cm]{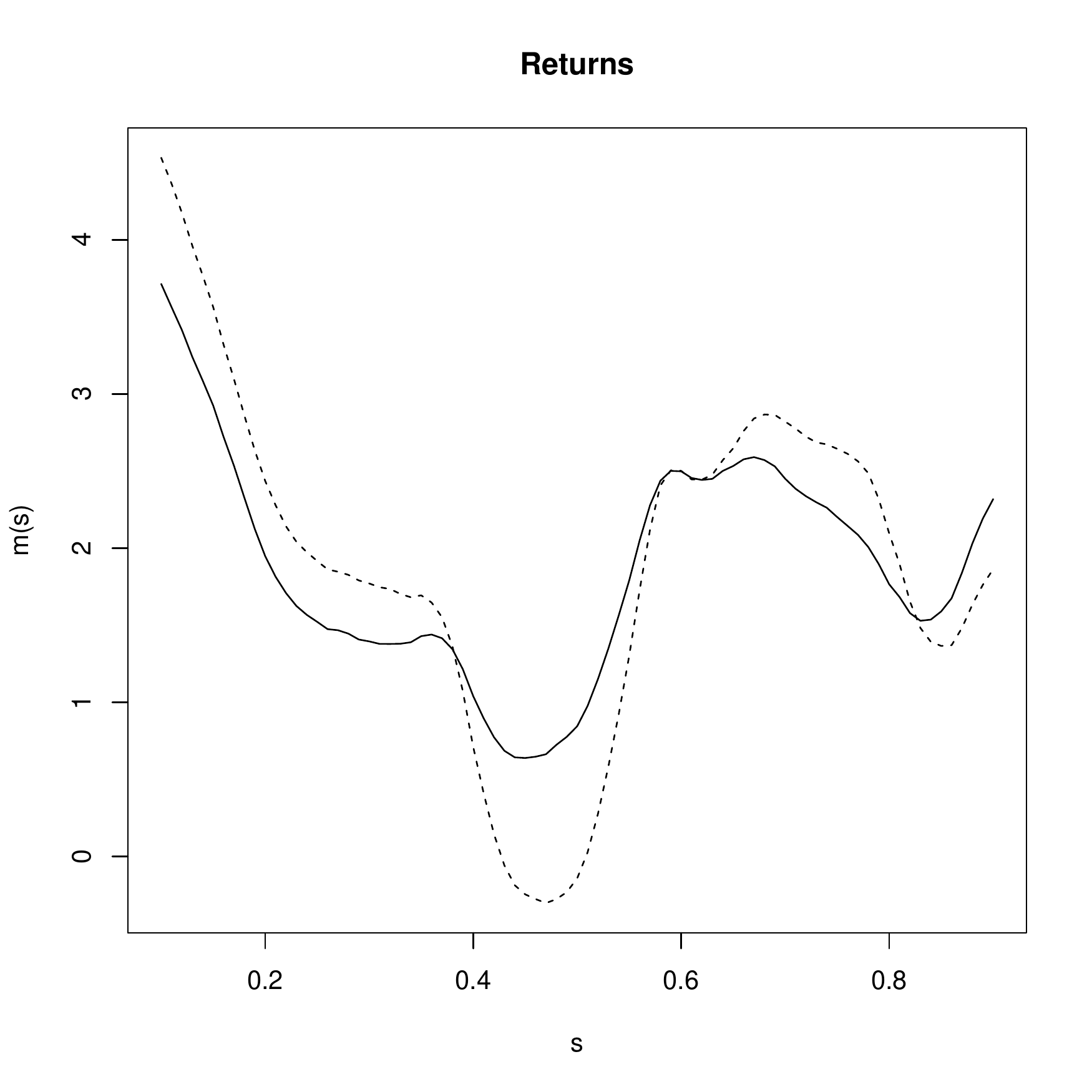}
\includegraphics[width=7cm,height=4cm]{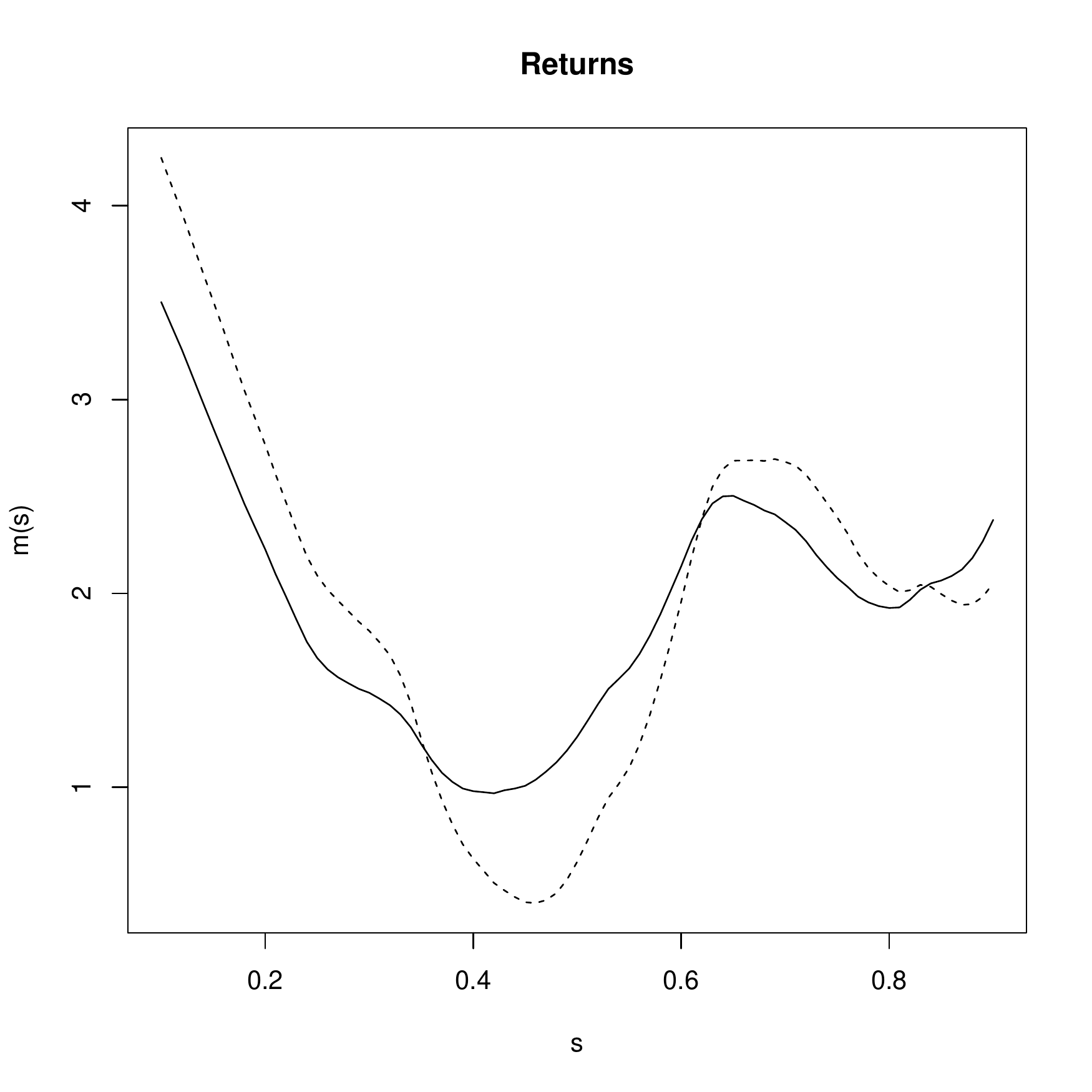}
\includegraphics[width=7cm,height=4cm]{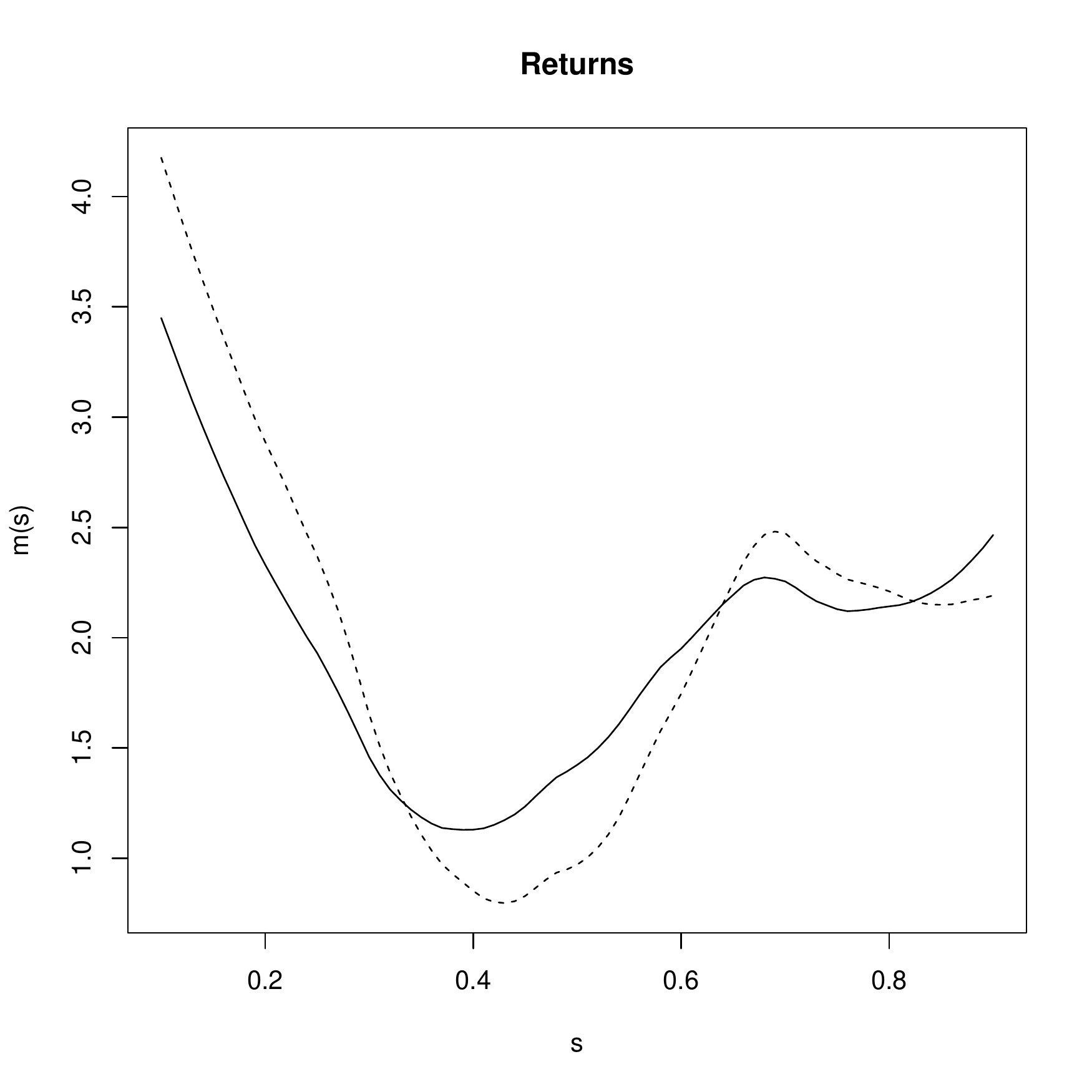}
\includegraphics[width=7cm,height=4cm]{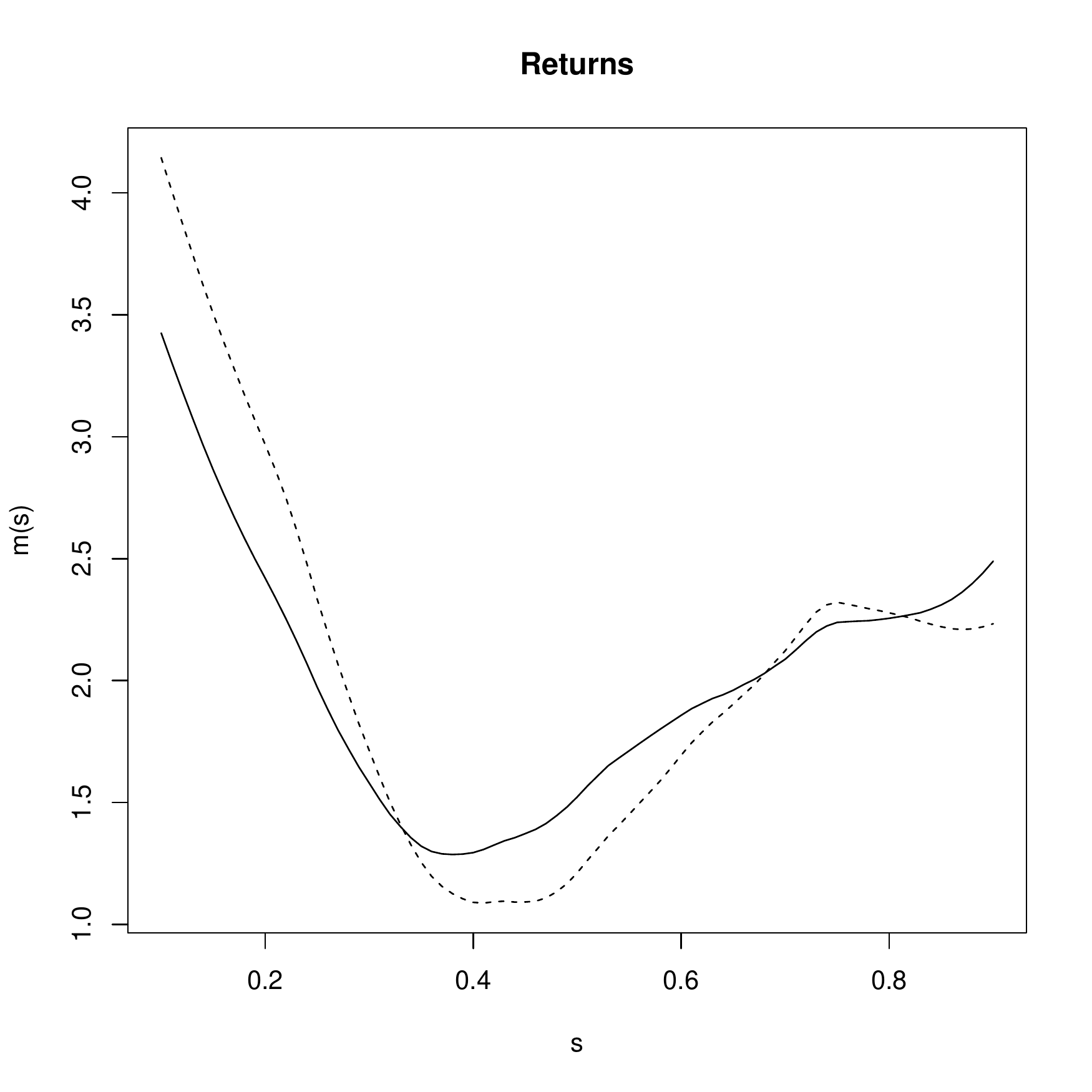}
\caption{Exchange rates. Solid line and dotted line represent $\hat m(s)$ and $\hat
m^*(s)$, respectively. Bandwidth $h=d \{\log^2(n)/n\}^{1/5}$ with
$d=0.2, 0.3, 0.4, 0.5$ is employed in the upper left, upper right,
lower left, lower right panels, respectively.}
\end{figure}

\begin{figure}[!t]
\centering
\includegraphics[width=14cm,height=6cm]{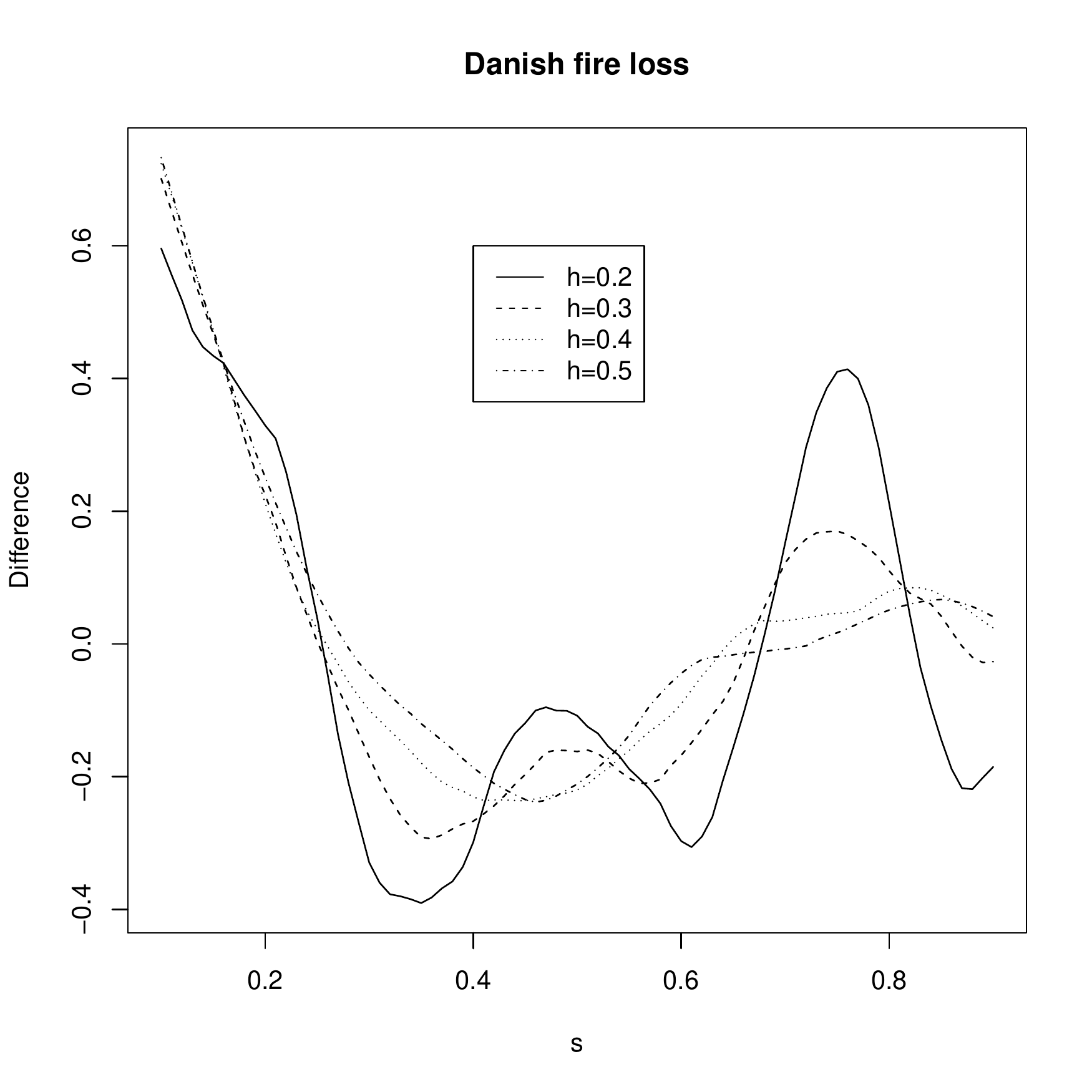}

\caption{Danish fire losses. Differences of $\hat m(s)-\hat
m^*(s)$ are plotted for bandwidth $h=d \{\log^2(n)/n\}^{1/5}$ with
$d=0.2, 0.3, 0.4, 0.5$.}
\end{figure}

\section{Proofs}

\begin{proof}[Proof of Theorem 1.]

We focus on the proof of case iii) since the other two cases can be
verified easily.

For any $\epsilon>0$ such that $\epsilon<-x$, write
\begin{equation}\label{pfTh1-1}\begin{array}{ll}
&1-\P(\Phi^-(F_1(X_i))\le \Phi^-(1+\frac xn), \Phi^-(F_2(Y_i))\le \Phi^-(1+\frac yn))\\
=&-\frac xn-\frac yn-\P\left(\Phi^{-}(F_1(X_i))>\Phi^-(1+\frac xn), \Phi^{-}(F_2(Y_i))>\Phi^-(1+\frac yn)\right)\\
=&-\frac xn-\frac yn-\int_{\Phi^-(1+x/n)}^{\infty}\left(1-\Phi\Big(\frac{\Phi^-(1+y/n)-\rho_is}{\sqrt{1-\rho_i^2}}\Big)\right)\,d\Phi(s)\\
=&-\frac xn-\frac yn-n^{-1}\int_x^0\left(1-\Phi\Big(\frac{\Phi^-(1+y/n)-\rho_i\Phi^-(1+s/n)}{\sqrt{1-\rho_i^2}}\Big)\right)\,ds\\
=&-\frac yn
-n^{-1}\int_{-\epsilon}^x\Phi\Big(\frac{\Phi^-(1+y/n)-\rho_i\Phi^-(1+s/n)}{\sqrt{1-\rho_i^2}}\Big)\,ds
-n^{-1}\int_0^{-\epsilon}\Phi\Big(\frac{\Phi^-(1+y/n)-\rho_i\Phi^-(1+s/n)}{\sqrt{1-\rho_i^2}}\Big)\,ds.
\end{array}
\end{equation}

For fixed $x<0$ and $y<0$, we have
\[\Phi^-(1+y/n)=\sqrt{2\log n}-\frac{\log(-y)}{\sqrt{2\log n}}-\frac{\log\log n+\log(4\pi)}{2\sqrt{2\log n}}+o(\frac{\log\log n}{\sqrt{\log n}})\]
and
\[\Phi^-(1+s/n)=\sqrt{2\log n}-\frac{\log(-s)}{\sqrt{2\log n}}-\frac{\log\log n+\log (4\pi)}{2\sqrt{2\log n}}+o(\frac{\log\log n}{\sqrt{\log n}})\]
uniformly in $s\in [x, -\epsilon]$, which implies that
\begin{equation}\label{pfTh1-2}\begin{array}{ll}
&\frac{\Phi^-(1+y/n)-\rho_i\Phi^-(1+s/n)}{\sqrt{1-\rho_i^2}}\\
=&\frac{\sqrt{2\log
n}\sqrt{1-\rho_i}}{\sqrt{1+\rho_i}}-\frac{\log(-y)}{\sqrt{2\log
n}\sqrt{1-\rho_i}\sqrt{1+\rho_i}}
+\frac{\rho_i\log(-s)}{\sqrt{2\log n}\sqrt{1-\rho_i}\sqrt{1+\rho_i}}-\frac{\log\log n+\log(4\pi)}{2\sqrt{2\log n}}\frac{\sqrt{1-\rho_i}}{\sqrt{1+\rho_i}}+o(\frac{\log\log n}{\sqrt{\log n}})\\
 =&\frac{\sqrt{2m(i/n)}}{\sqrt{2-m(i/n)/\log
n}}-\frac{\log(-y)}{\sqrt{2m(i/n)}\sqrt{2-m(i/n)/\log n}}+
\frac{(1-m(i/n)/\log n)\log (-s)}{\sqrt {2m(i/n)}\sqrt{2-m(i/n)/\log
n}}+o(\frac{\log\log n}{\sqrt{\log n}})
% &+O\big(\frac{\log\log
% n}{\sqrt{\log n}}\max_{1\le i\le n}\sqrt{m(i/n)/\log
% n}\big)+o(\frac{\log\log n}{\sqrt{\log n}})
\end{array}\end{equation}
uniformly for $s\in [x,-\epsilon]$, where  $x<0$ and $y<0$ are fixed
and $\epsilon\in (0, -x)$ is  any given constant.

Since $m(s)$ is a continuous positive function, it follows from
(\ref{pfTh1-2}) that
\begin{equation}\label{pfTh1-3}
\begin{array}{ll}
&n^{-1}\int_{-\epsilon}^x\Phi\Big(\frac{\Phi^-(1+y/n)-\rho_i\Phi^-(1+s/n)}{\sqrt{1-\rho_i^2}}\Big)\,ds\\
=&n^{-1}\left(\int_{-\epsilon}^x\Phi\Big(\sqrt{m(i/n)}-\frac{\log(-y)}{2\sqrt{m(i/n)}}+\frac{\log(-s)}{2\sqrt{m(i/n)}}\Big)\,ds\right)(1+o(1))\\
=&\left(n^{-1}x\Phi\Big(\sqrt{m(i/n)}+\frac{\log(x/y)}{2\sqrt{m(i/n)}}\Big)+n^{-1}\epsilon\Phi\Big(\sqrt{m(i/n)}+\frac{\log(-\epsilon/y)}{2\sqrt{m(i/n)}}\Big)\right.\\
&\left. -n^{-1}\frac{1}{2\sqrt{m(i/n)}}\int_{-\epsilon}^x\phi\Big(\sqrt{m(i/n)}-\frac{\log(-y)}{2\sqrt{m(i/n)}}+\frac{\log(-s)}{2\sqrt{m(i/n)}}\Big)\,ds\right)(1+o(1))\\
=&\left(n^{-1}x\Phi\Big(\sqrt{m(i/n)}+\frac{\log(x/y)}{2\sqrt{m(i/n)}}\Big)+n^{-1}\epsilon\Phi\Big(\sqrt{m(i/n)}+\frac{\log(-\epsilon/y)}{2\sqrt{m(i/n)}}\Big)\right.\\
&\left. +n^{-1}\frac{1}{2\sqrt{m(i/n)}}\int_{\log\epsilon}^{\log(-x)}\phi\Big(\sqrt{m(i/n)}-\frac{\log(-y)}{2\sqrt{m(i/n)}}+\frac{s}{2\sqrt{m(i/n)}}\Big)e^s\,ds\right)(1+o(1))\\
=&\left(n^{-1}x\Phi\Big(\sqrt{m(i/n)}+\frac{\log(x/y)}{2\sqrt{m(i/n)}}\Big)+n^{-1}\epsilon\Phi\Big(\sqrt{m(i/n)}+\frac{\log(-\epsilon/y)}{2\sqrt{m(i/n)}}\Big)\right.\\
&\left. -n^{-1}y\Phi\Big(\frac{\log(x/y)}{2\sqrt{m(i/n)}}-\sqrt{m(i/n)}\Big)+n^{-1}y\Phi\Big(\frac{\log(-\epsilon/y)}{2\sqrt{m(i/n)}}-\sqrt{m(i/n)}\Big)\right)(1+o(1))\\
=&\left(n^{-1}x\Phi\Big(\sqrt{m(i/n)}+\frac{\log(x/y)}{2\sqrt{m(i/n)}}\Big)+n^{-1}\epsilon\Phi\Big(\sqrt{m(i/n)}+\frac{\log(-\epsilon/y)}{2\sqrt{m(i/n)}}\Big)\right.\\
&\left.-n^{-1}y+n^{-1}y\Phi\Big(\frac{\log(y/x)}{2\sqrt{m(i/n)}}+\sqrt{m(i/n)}\Big)+n^{-1}y\Phi\Big(\frac{\log(-\epsilon/y)}{2\sqrt{m(i/n)}}-\sqrt{m(i/n)}\Big)\right)(1+o(1)),
\end{array}\end{equation}
where $\phi(s)=\Phi'(s)$. Hence, it follows from  (\ref{pfTh1-1})
and (\ref{pfTh1-3}) that
\[\begin{array}{ll}
&\lim_{\epsilon\to 0}\lim_{n\to\infty}\sum_{i=1}^n\Big(1-\P(\Phi^-(F_1(X_i))\le \Phi^-(1+\frac xn), \Phi^-(F_2(Y_i))\le\Phi^-(1+\frac yn))\Big)\\
=&-x\int_0^1\Phi\Big(\sqrt{m(s)}+\frac{\log(x/y)}{2\sqrt{m(s)}}\Big)\,ds-y\int_0^1\Phi\Big(\sqrt{m(s)}+\frac{\log(y/x)}{2\sqrt{m(s)}}\Big)\,ds
\end{array}\]
for any $x<0$ and $y<0$, which implies that
\[\begin{array}{ll}
&\lim_{n\to\infty}\P\Big(n(\max_{1\le i\le n}F_1(X_i)-1)\le x, n(\max_{1\le i\le n}F_2(Y_i)-1)\le y\Big)\\
=&\lim_{n\to\infty}\prod_{i=1}^n\P\Big(\Phi^-(F_1(X_i))\le\Phi^-(1+\frac xn), \Phi^-(F_2(Y_i))\le \Phi^-(1+\frac yn)\Big)\\
=&\exp\left(\lim_{n\to\infty}\sum_{i=1}^n\log \P\Big(\Phi^-(F_1(X_i))\le\Phi^-(1+\frac xn), \Phi^-(F_2(Y_i))\le\Phi^-(1+\frac yn)\Big)\right)\\
=&\exp\left(-\lim_{n\to\infty}\sum_{i=1}^n\left(1-\P\Big(\Phi^-(F_1(X_i))\le\Phi^-(1+\frac xn), \Phi^-(F_2(Y_i))\le \Phi^-(1+\frac yn)\Big)\right)\right)\\
=&\exp\left(x\int_0^1\Phi\Big(\sqrt{m(s)}+\frac{\log(x/y)}{2\sqrt{m(s)}}\Big)\,ds+y\int_0^1\Phi\Big(\sqrt{m(s)}+\frac{\log
(y/x)}{2\sqrt{m(s)}}\Big)\,ds\right)
\end{array}\]
for all $x<0$ and $y<0$. The rest for computing the tail dependence
function and tail coefficient is straightforward.
\end{proof}

\begin{proof}[Proof of Theorem \ref{th2}.]

Put $U_i=F_1(X_i),$ $V_i=F_2(Y_i),$ $\hat U_n(u)=\frac
1{n+1}\sum_{i=1}^nI(U_i\le u)$, $\hat V_n(v)=\frac
1{n+1}\sum_{i=1}^nI(V_i\le v)$, $Z_i=\left(U_i-\frac
12\right)\left(V_i-\frac 12\right)$ and $\hat Z_i=\left(\hat
U_n(U_i)-\frac 12\right)\left(\hat V_n(V_i)-\frac 12\right)$ for
$i=1,\cdots,n$. Then
\begin{equation}\label{pfth2-1}
\left\{\left(\hat F_1(X_i)-\frac 12\right)\left(\hat F_2(Y_i)-\frac
12\right)\right\}_{i=1}^n\overset{d}{=}\{\hat Z_i\}_{i=1}^n.
\end{equation} It is also known that
\begin{equation}
\label{pfth2-2} \sup_{0<u<1}\left|\frac{\sqrt n\{\hat
U_n(u)-u\}}{u^\delta(1-u)^\delta}\right|=O_p(1)\quad\text{and}\quad\sup_{0<v<1}\left|\frac{\sqrt
n\{\hat V_n(v)-v\}}{v^\delta(1-v)^\delta}\right|=O_p(1);
\end{equation}
see Inequality 1 in Page 134 of Shorack and Wellner \cite{shorack}(1986).

Put
\[I_1=\frac 1{\sqrt n}\sum_{i=1}^n(\hat U_n(U_i)-U_i)(\hat V_n(V_i)-V_i),\]
\[\begin{array}{ll}
I_2&=\frac 1{(n+1)\sqrt n}\sum_{i=1}^n\sum_{j=1}^n\Big((I(U_j\le U_i)-U_i)(V_i-\frac 12)\\
&\quad-\int_0^1\int_0^1(I(U_j\le u)-u)(v-\frac
12)c(u,v;\rho_i)\,dudv\Big),\end{array}\]
\[\begin{array}{ll}
I_3&=\frac 1{(n+1)\sqrt n}\sum_{i=1}^n\sum_{j=1}^n\Big((I(V_j\le V_i)-V_i)(U_i-\frac 12)\\
&\quad-\int_0^1\int_0^1(I(V_j\le v)-v)(u-\frac
12)c(u,v;\rho_i)\,dudv\Big)\end{array}\] and
\[\begin{array}{ll}
\tilde Z_i&=\frac 1{(n+1)}\sum_{j=1}^n\int_0^1\int_0^1(I(U_i\le u)-u)(v-\frac 12)c(u,v;\rho_j)\,dudv\\
&\quad+\frac 1{(n+1)}\sum_{j=1}^n\int_0^1\int_0^1(I(V_i\le v)-v)(u-\frac 12)c(u,v;\rho_j)\,dudv\\
&\quad+\Big(Z_i-\frac 1{2\pi}\arcsin(\frac{\rho_i}2)\Big)\\
&=\tilde Z_{i,1}+\tilde Z_{i,2}+\tilde Z_{i,3}.
\end{array}\]
Therefore
\begin{equation}\label{pfth2-3}
\frac 1{\sqrt n}\sum_{i=1}^n\Big(\hat Z_i-\frac
1{2\pi}\arcsin(\frac{\rho_i}2)\Big) =I_1+I_2+I_3+\frac 1{\sqrt
n}\sum_{i=1}^n\tilde Z_i.
\end{equation}
It follows from (\ref{pfth2-2}) that
\begin{equation}\label{pfth2-4}
I_1=O_p(\frac 1{\sqrt n}).
\end{equation}
Direct calculations show that $\E I_2^2=O(\frac 1n)$ and $\E
I_3^2=O(\frac 1n)$, which imply that
\begin{equation}\label{pfth2-5}
I_2=O_p(\frac 1{\sqrt n})\quad\text{and}\quad I_3=O_p(\frac 1{\sqrt
n}).
\end{equation}
By (\ref{pfth2-1}), (\ref{pfth2-3})--(\ref{pfth2-5}), we have
\begin{equation}
\label{pfth2-5a} \frac 1{\sqrt n}l_{n1}(\alpha,\beta,\gamma)=\frac
1{\sqrt n}\sum_{i=1}^n\tilde Z_i+O_p(1/\sqrt n).
\end{equation}
Using
\[
\left\{\begin{array}{ll}
&\frac{\partial}{\partial u}C(u,v;\rho_i)=\Phi\Big(\frac{\Phi^-(v)-\rho_i\Phi^-(u)}{\sqrt{1-\rho_i^2}}\Big):=C_1(u,v;\rho_i)\\
&\frac{\partial}{\partial
v}C(u,v;\rho_i)=\Phi\Big(\frac{\Phi^-(u)-\rho_i\Phi^-(v)}{\sqrt{1-\rho_i^2}}\Big):=C_2(u,v;\rho_i),
\end{array}\right.\]
we have
\begin{equation}\label{pfth2-6}\begin{array}{ll}
&\int_0^1(v-\frac 12)c(u,v;\rho_i)\,dv\\
=&\int_0^1(v-\frac 12)\,C_1(u,dv;\rho_i)\\
=&\frac 12C_1(u,1;\rho_i)+\frac 12C_1(u,0;\rho_i)-\int_0^1C_1(u,v;\rho_i)\,dv\\
=&u-\frac 12-\int_0^u\Phi\Big(\frac{\Phi^-(v)-\Phi^-(u)+(1-\rho_i)\Phi^-(u)}{\sqrt{1-\rho_i^2}}\Big)\,dv\\
&+\int_u^1\left(1-\Phi\Big(\frac{\Phi^-(v)-\Phi^-(u)+(1-\rho_i)\Phi^-(u)}{\sqrt{1-\rho_i^2}}\Big)\right)\,dv\\
=&u-\frac 12-\int_{-\infty}^0\Phi\Big(v+\frac{1-\rho_i}{\sqrt{1-\rho_i^2}}\Phi^-(u)\Big)\phi\Big(v\sqrt{1-\rho_i^2}+\Phi^-(u)\Big)\sqrt{1-\rho_i^2}\,dv\\
&+\int_{0}^\infty\left(1-\Phi\Big(v+\frac{1-\rho_i}{\sqrt{1-\rho_i^2}}\Phi^-(u)\Big)\right)\phi\Big(v\sqrt{1-\rho_i^2}+\Phi^-(u)\Big)\sqrt{1-\rho_i^2}\,dv\\
=&u-\frac 12-\sqrt{1-\rho_i^2}\left(\int_{-\infty}^0\Phi(v)\phi(\Phi^-(u))\,dv+O(\sqrt{1-\rho_i})\right)\\
&+\sqrt{1-\rho_i^2}\Big(\int_0^{\infty}(1-\Phi(v))\phi(\Phi^-(u))\,dv+O(\sqrt{1-\rho_i})\Big)\\
=&u-\frac 12-\sqrt{1-\rho_i^2}\left(\frac{1}{\sqrt{2\pi}}\phi(\Phi^-(u))+O(\sqrt{1-\rho_i})\right)\\
&+\sqrt{1-\rho_i^2}\left(\frac 1{\sqrt{2\pi}}\phi(\Phi^-(u))+O(\sqrt{1-\rho_i})\right)\\
=&u-\frac 12+O(1/\log n)
\end{array}\end{equation}
and
\begin{equation}
\label{pfth2-6a}
\begin{array}{ll}
&\int_0^1(v-\frac 12)^2c(u,v;\rho_i)\,dv\\
=&\frac 14-2\int_0^1(v-\frac 12)C_1(u,v;\rho_i)\,dv\\
=&(u-\frac 12)^2-2\int_0^u(v-\frac 12)\Phi\Big(\frac{\Phi^-(v)-\Phi^-(u)+(1-\rho_i)\Phi^-(u)}{\sqrt{1-\rho_i^2}}\Big)\,dv\\
&+2\int_u^1(v-\frac 12)\left(1-\Phi\Big(\frac{\Phi^-(v)-\Phi^-(u)+(1-\rho_i)\Phi^-(u)}{\sqrt{1-\rho_i^2}}\Big)\right)\,dv\\
=&(u-\frac 12)^2-2\sqrt{1-\rho_i^2}\left(\int_{-\infty}^0(v-\frac 12)\Phi(v)\phi(\Phi^-(u))\,dv+O(\sqrt{1-\rho_i})\right)\\
&+2\sqrt{1-\rho_i^2}\Big(\int_0^{\infty}(v-\frac 12)(1-\Phi(v))\phi(\Phi^-(u))\,dv+O(\sqrt{1-\rho_i})\Big)\\
=&(u-\frac 12)^2-2\sqrt{1-\rho_i^2}\left(-(\frac 1{2\sqrt{2\pi}}+\frac 14)\phi(\Phi^-(u))+O(\sqrt{1-\rho_i})\right)\\
&+2\sqrt{1-\rho_i^2}\Big((\frac 14-\frac 1{2\sqrt{2\pi}})\phi(\Phi^-(u))+O(\sqrt{1-\rho_i})\Big)\\
=&(u-\frac 12)^2+\phi(\Phi^-(u))\sqrt{1-\rho_i^2}+O(1/\log n).
\end{array}\end{equation}
By (\ref{pfth2-6}), (\ref{pfth2-6a}), $C(u,v;1)=u\wedge v$ and
\begin{equation}\label{pfth2-6aa}
\frac{d}{d\rho}C(u,v;\rho)=\frac
1{2\pi\sqrt{1-\rho^2}}\exp\left(-\frac{(\Phi^-(u))^2-2\rho\Phi^-(u)\Phi^-(v)+(\Phi^-(v))^2}{2(1-\rho^2)}\right)
\end{equation}
(see Plackett (1954)),
 we have
\begin{equation}
\label{pfth2-7}
\begin{array}{ll}
&\E\tilde Z_{i,1}^2=\E\tilde Z_{i,2}^2\\
=&\frac 1{(n+1)^2}\sum_{j=1}^n\sum_{k=1}^n\int_0^1\int_0^1\int_0^1\int_0^1(u_1\wedge u_2-u_1u_2)(v_1-\frac 12)(v_2-\frac 12)\times\\
& c(u_1,v_1;\rho_j)c(u_2,v_2;\rho_k)\,dv_1dv_2du_1du_2\\
=&\int_0^1\int_0^1(u_1\wedge u_2-u_1u_2)(u_1-\frac 12)(u_2-\frac 12)\,du_1du_2+O(1/\log n)\\
=&\frac 1{720}+O(1/\log n),
\end{array}
\end{equation}
\begin{equation}\label{pfth2-8}\begin{array}{ll}
&\E\tilde Z_{i,3}^2\\
=&\int_0^1\int_0^1(u-\frac 12)^2(v-\frac 12)^2c(u,v;\rho_i)\,dvdu-\Big(\frac 1{2\pi}\arcsin(\frac{\rho_i}2)\Big)^2\\
=&\int_0^1(u-\frac 12)^2\Big((u-\frac 12)^2+\phi(\Phi^-(u))\sqrt{1-\rho_i^2}\Big)\,du+O(1/\log n)-\Big(\frac 1{2\pi}\arcsin(\frac 12)\Big)^2\\
=&\frac 1{80}+\frac{\sqrt 2\sqrt{m(i/n)}}{\sqrt{\log n}}\int_0^1(u-\frac 12)^2\phi(\Phi^-(u))\,du-\frac 1{144}+O(1/\log n)\\
=&\frac 1{180}+\frac{\sqrt 2\sqrt{m(i/n)}}{\sqrt{\log
n}}\int_0^1(u-\frac 12)^2\phi(\Phi^-(u))\,du+O(1/\log n),
\end{array}\end{equation}
\begin{equation}\label{pfth2-9}\begin{array}{ll}
&\E(\tilde Z_{i,1}\tilde Z_{i,2})\\
=&\frac{1}{(n+1)^2}\sum_{j=1}^n\sum_{k=1}^n\int_0^1\int_0^1\int_0^1\int_0^1(C(u_1,v_2;\rho_i)-u_1v_2)(v_1-\frac 12)(u_2-\frac 12)\times\\
&c(u_1,v_1;\rho_j)c(u_2,v_2;\rho_k)\,dv_1du_2du_1dv_2\\
=&\int_0^1\int_0^1(u_1\wedge v_2-u_1v_2)(u_1-\frac 12)(v_2-\frac 12)\,du_1dv_2+O(1/\log n)\\
=&\frac 1{720}+O(1/\log n)
\end{array}\end{equation}
and
\begin{equation}
\label{pfth2-10}\begin{array}{ll}
&\E(\tilde Z_{i,1}\tilde Z_{i,3})=\E(\tilde Z_{i,2}\tilde Z_{i,3})\\
=&\frac 1{n+1}\sum_{j=1}^n\int_0^1\int_0^1\int_0^1\int_0^1(I(u_2\le u_1)-u_1)(v_1-\frac 12)(u_2-\frac 12)(v_2-\frac 12)\times\\
&c(u_1,v_1;\rho_j)c(u_2,v_2;\rho_i)\,dv_1dv_2du_1du_2\\
=&\int_0^1\int_0^1(I(u_2\le u_1)-u_1)(u_2-\frac 12)(u_1-\frac 12)(u_2-\frac 12)\,du_1du_2+O(1/\log n)\\
=&-\frac 1{360}+O(1/\log n).
\end{array}\end{equation}
Hence, it follows from (\ref{pfth2-7})--(\ref{pfth2-10}) that
\begin{equation}\label{pfth2-11}
\frac 1{n}\sum_{i=1}^n\E\tilde Z_i^2=\frac{\sqrt
2\int_0^1\sqrt{\alpha+\beta s^{\gamma}}\,ds}{\sqrt {\log
n}}\int_0^1\left(u-\frac 12\right)^2\phi(\Phi^-(u))\,du+O(1/\log n).
\end{equation}

It is easy to check that
\[\E\left(\frac1n\sum_{i=1}^n(\tilde Z_i^2-\E\tilde Z_i^2)\right)^2=\frac 1{n^2}\sum_{i=1}^n\E(\tilde Z_i^2-\E\tilde Z_i^2)^2=O(1/n),\]
which, combining  with (\ref{pfth2-11}), implies that
\begin{equation}
\label{pfth2-11a} \sum_{i=1}^n\left(\frac{(\log n)^{1/4}}{\sqrt
n}\tilde Z_i\right)^2\overset{p}{\to} \sqrt
2\left(\int_0^1\sqrt{\alpha+\beta
s^{\gamma}}\,ds\right)\left(\int_0^1(u-\frac
12)^2\phi(\Phi^-(u))\,du\right).
\end{equation}
Obviously we have
\begin{equation}
\label{pfth2-11b} \max_{1\le i\le n}\left|\frac{(\log
n)^{1/4}}{\sqrt n}\tilde
Z_i\right|\overset{p}{\to}0\quad\text{and}\quad \E\left(\max_{1\le
i\le n}\frac{(\log n)^{1/2}}n\tilde Z_i^2\right)=o(1).
\end{equation}
Hence, it follows from (\ref{pfth2-5a}), (\ref{pfth2-11a}),
(\ref{pfth2-11b}) and Theorem 3.2 of Hall and Heyde (1980) that
\begin{equation}
\label{pfth2-12} \frac{(\log n)^{1/4}}{\sqrt
n}l_{n1}(\alpha,\beta,\gamma) \to N\left(0,\sqrt
2\Big(\int_0^1\sqrt{\alpha+\beta
s^{\gamma}}\,ds\Big)\Big(\int_0^1(u-\frac
12)^2\phi(\Phi^-(u))\,du\Big)\right).
\end{equation}

Note that the above limit has a nonstandard rate, which can be
explained as follows. When $\rho_i=1$ for $i=1,\cdots,n$, we have
\[l_{n1}(\alpha,\beta,\gamma)=\sum_{i=1}^n\left((\hat U_n(U_i)-\frac 12)^2-\frac {1}{12}\right)=\sum_{i=1}^n \left((\frac i{n+1}-\frac 12)^2-\frac {1}{12}\right),\]
which  becomes a constant. However, $l_{n2}(\alpha,\beta,\gamma)$
and $l_{n3}(\alpha,\beta,\gamma)$ are non-degenerate due to the
involved factors $(i/n)^{\gamma}$ and $(i/n)^{\gamma}\log(i/n)$.
That is, deriving the asymptotic limit of
$l_{n1}(\alpha,\beta,\gamma)$ needs finer expansions than the other
two quantities. Below we show the asymptotic limits for both
$l_{n2}$ and $l_{n3}$ have the standard rate $1/\sqrt n$.

Define
\[\begin{array}{ll}
\tilde Z_i^*&=\frac 1{(n+1)}\sum_{j=1}^n(\frac jn)^{\gamma}\int_0^1\int_0^1(I(U_i\le u)-u)(v-\frac 12)c(u,v;\rho_j)\,dudv\\
&\quad+\frac 1{(n+1)}\sum_{j=1}^n(\frac jn)^{\gamma}\int_0^1\int_0^1(I(V_i\le v)-v)(u-\frac 12)c(u,v;\rho_j)\,dudv\\
&\quad+\Big(Z_i-\frac 1{2\pi}\arcsin(\frac{\rho_i}2)\Big)(\frac in)^{\gamma}\\
&=\tilde Z_{i,1}^*+\tilde Z_{i,2}^*+\tilde Z_{i,3}^*
\end{array}\]
and
\[\begin{array}{ll}
\tilde Z_i^{**}&=\frac 1{(n+1)}\sum_{j=1}^n(\frac jn)^{\gamma}\log(\frac jn)\int_0^1\int_0^1(I(U_i\le u)-u)(v-\frac 12)c(u,v;\rho_j)\,dudv\\
&\quad+\frac 1{(n+1)}\sum_{j=1}^n(\frac jn)^{\gamma}\log(\frac jn)\int_0^1\int_0^1(I(V_i\le v)-v)(u-\frac 12)c(u,v;\rho_j)\,dudv\\
&\quad+\Big(Z_i-\frac 1{2\pi}\arcsin(\frac{\rho_i}2)\Big)(\frac in)^{\gamma}\log(\frac in)\\
&=\tilde Z_{i,1}^{**}+\tilde Z_{i,2}^{**}+\tilde Z_{i,3}^{**}.
\end{array}\]
Similar to the proof of (\ref{pfth2-5a}), we can show that
\begin{equation}\label{pfth2-12a}
\left\{\begin{array}{ll}
&\frac 1{\sqrt n}l_{n2}(\alpha,\beta,\gamma)=\frac 1{\sqrt n}\sum_{i=1}^n\tilde Z_i^*+O_p(1/\sqrt n)\\
&\frac 1{\sqrt n}l_{n3}(\alpha,\beta,\gamma)=\frac 1{\sqrt
n}\sum_{i=1}^n\tilde Z_i^{**}+O_p(1/\sqrt n).
\end{array}\right.\end{equation}
Like the proofs of (\ref{pfth2-7})--(\ref{pfth2-10}), we can show
that
\begin{equation*}
\label{pfth2-13}
\begin{array}{ll}
&\E\tilde Z_{i,1}^{*2}=\E\tilde Z_{i,2}^{*2}\\
=&(\frac 1n\sum_{j=1}^n(\frac jn)^{\gamma})^2\int_0^1\int_0^1(u_1\wedge u_2-u_1u_2)(u_1-\frac 12)(u_2-\frac 12)\,du_1du_2+O(1/\log n)\\
=&\frac {1}{720(1+\gamma)^2}+O(1/\log n),
\end{array}
\end{equation*}

\begin{equation*}\label{pfth2-14}
\left\{\begin{array}{ll}
&\E\tilde Z_{i,3}^{*2}=\frac 1{180}(\frac in)^{2\gamma}+O(1/\sqrt{\log n}),\quad \E(\tilde Z_{i,1}^*\tilde Z_{i,2}^*)=\frac 1{720(1+\gamma)^2}+O(1/\log n)\\
&\E(\tilde Z_{i,1}^*\tilde Z_{i,3}^*)=\E(\tilde Z_{i,2}^*\tilde
Z_{i,3}^*)=-\frac{1}{360(1+\gamma)}(\frac
in)^{\gamma}+O(1/\sqrt{\log n}),
\end{array}
\right.\end{equation*}

\begin{equation*}
\label{pfth2-15}
\begin{array}{ll}
&\E\tilde Z_{i,1}^{**2}=\E\tilde Z_{i,2}^{**2}\\
=&\Big(\frac 1n\sum_{j=1}^n(\frac jn)^{\gamma}\log(\frac  jn)\Big)^2\int_0^1\int_0^1(u_1\wedge u_2-u_1u_2)(u_1-\frac 12)(u_2-\frac 12)\,du_1du_2+O(1/\log n)\\
=&\frac 1{720(1+\gamma)^4}+O(1/\log n)
\end{array}
\end{equation*}
and
\[\left\{\begin{array}{ll}
&\E\tilde Z_{i,3}^{**2}=\frac 1{180}(\frac in)^{2\gamma}\log^2(\frac in)+O(1/\sqrt{\log n}),\quad \E(\tilde Z_{i,1}^{**}\tilde Z_{i,2}^{**})=\frac 1{720(1+\gamma)^4}+O(1/\log n)\\
&\E(\tilde Z_{i,1}^{**}\tilde Z_{i,3}^{**})=\E(\tilde
Z_{i,2}^{**}\tilde Z_{i,3}^{**})=\frac 1{360(1+\gamma)^2}(\frac
in)^{\gamma}\log(\frac in)+O(1/\sqrt{\log n}),
\end{array}\right.\]
which imply that
\begin{equation}
\label{pfth2-16} \left\{\begin{array}{ll}
&\E(\frac 1{\sqrt n}\sum_{i=1}^n\tilde Z_i^*)^2=\frac 1n\sum_{i=1}^n\E(\tilde Z_{i,1}^*+\tilde Z_{i,2}^*+\tilde Z_{i,3}^*)^2\to\frac 1{180(1+2\gamma)}-\frac 1{180(1+\gamma)^2}\\
&\E(\frac 1{\sqrt n}\sum_{i=1}^n\tilde Z_i^{**})^2=\frac
1n\sum_{i=1}^n\E(\tilde Z_{i,1}^{**}+\tilde Z_{i,2}^{**}+\tilde
Z_{i,3}^{**})^2\to\frac{1}{90(1+2\gamma)^3}-\frac
1{180(1+\gamma)^4}.
\end{array}\right.\end{equation}
Like the proof of (\ref{pfth2-12}), by using (\ref{pfth2-16}), we
can show that
\[\frac 1{\sqrt n}l_{n2}(\alpha,\beta,\gamma)\overset{d}{\to} N\left(0, \frac 1{180(1+2\gamma)}-\frac 1{180(1+\gamma)^2}\right)\]
and
\[\frac 1{\sqrt n}l_{n3}(\alpha,\beta,\gamma)\overset{d}{\to} N\left(0, \frac 1{90(1+2\gamma)^3}-\frac 1{180(1+\gamma)^4}\right).\]
Some further tedious calculations show that
\[\begin{array}{ll}
&\frac{(\log n)^{1/4}}{n}\sum_{i=1}^n\E(\tilde Z_i\tilde Z_i^*)\\
=&\frac{(\log n)^{1/4}}{n}\sum_{i=1}^n\sum_{j=1}^3\sum_{k=1}^3\E(\tilde Z_{i,j}\tilde Z_{i,k}^*)\\
=&\frac{(\log n)^{1/4}}{n}\sum_{i=1}^n\left(\frac 1{720(1+\gamma)}+\frac 1{720(1+\gamma)}-\frac 1{360}(\frac in)^{\gamma}\right.\\
&\left. +\frac{1}{720(1+\gamma)}+\frac1{720(1+\gamma)}-\frac 1{360}(\frac in)^{\gamma}-\frac 1{360(1+\gamma)}-\frac 1{360(1+\gamma)}+\frac 1{180}(\frac in)^{\gamma}+O(1/\sqrt{\log n})\right)\\
=&o(1),
\end{array}\]
 \[\begin{array}{ll}
&\frac{(\log n)^{1/4}}{n}\sum_{i=1}^n\E(\tilde Z_i\tilde Z_i^{**})\\
=&\frac{(\log n)^{1/4}}{n}\sum_{i=1}^n\sum_{j=1}^3\sum_{k=1}^3\E(\tilde Z_{i,j}\tilde Z_{i,k}^{**})\\
=&\frac{(\log n)^{1/4}}{n}\sum_{i=1}^n\Big(-\frac 1{720(1+\gamma)^2}-\frac 1{720(1+\gamma)^2}-\frac 1{360}(\frac in)^{\gamma}\log(\frac in)\\
&-\frac{1}{720(1+\gamma)^2}-\frac1{720(1+\gamma)^2}-\frac 1{360}(\frac in)^{\gamma}\log(\frac in)+\frac 1{360(1+\gamma)^2}+\frac 1{360(1+\gamma)^2}\\
&+\frac 1{180}(\frac in)^{\gamma}\log(\frac in)+O(1/\sqrt{\log n})\Big)\\
=&o(1)
\end{array}\]
and
\[\begin{array}{ll}
&\frac1{n}\sum_{i=1}^n\E(\tilde Z_i^*\tilde Z_i^{**})\\
=&\frac 1n\sum_{i=1}^n\sum_{j=1}^3\sum_{k=1}^3\E(\tilde Z_{i,j}^*\tilde Z_{i,k}^{**})\\
=&\frac 1n\sum_{i=1}^n\Big(-\frac 1{720(1+\gamma)^3}-\frac 1{720(1+\gamma)^3}-\frac 1{360(1+\gamma)}(\frac in)^{\gamma}\log(\frac in)\\
&-\frac 1{720(1+\gamma)^3}-\frac 1{720(1+\gamma)^3}-\frac 1{360(1+\gamma)}(\frac in)\log(\frac in)+\frac 1{360(1+\gamma)^2}(\frac in)^{\gamma}+\frac 1{360(1+\gamma)^2}(\frac in)^{\gamma}\\
&+\frac 1{180}(\frac in)^{2\gamma}\log(\frac in)+O(1/\sqrt{\log n})\Big)\\
=&\frac{1}{180(1+\gamma)^3}-\frac 1{180(1+2\gamma)^2}+o(1).
\end{array}\]
Hence, by Cram\'er device, we can show that
\begin{equation}\label{pfth2-17}
\left(\frac{(\log n)^{1/4}}{\sqrt n}l_{n1}(\alpha,\beta,\gamma), \frac 1{\sqrt n}l_{n2}(\alpha,\beta,\gamma), \frac 1{\sqrt n}l_{n3}(\alpha,\beta,\gamma)\right)^T\\
\overset{d}{\to} N(0,\Sigma).
\end{equation}

It is straightforward to check that
\begin{equation}
\label{pfth2-18} \left\{\begin{array}{ll}
&\frac{\log n}n\frac{\partial l_{n1}(\alpha,\beta,\gamma)}{\partial\alpha}\to \frac{\sqrt 3}{6\pi}, \frac{\log n}n\frac{\partial l_{n1}(\alpha,\beta,\gamma)}{\partial\beta}\to\frac{\sqrt 3}{6\pi(1+\gamma)}, \\
&\frac{\log n}n\frac{\partial l_{n1}(\alpha,\beta,\gamma)}{\partial\gamma}\to-\frac {\sqrt 3\beta}{6\pi(1+\gamma)^2},\frac{\log n}n\frac{\partial l_{n2}(\alpha,\beta,\gamma)}{\partial\alpha}\to\frac{\sqrt{3}}{6\pi(1+\gamma)}, \\
&\frac{\log n}n\frac{\partial l_{n2}(\alpha,\beta,\gamma)}{\partial\beta}\to\frac{\sqrt 3}{6\pi(1+2\gamma)}, \frac{\log n}n\frac{\partial l_{n2}(\alpha,\beta,\gamma)}{\partial\gamma}\to-\frac {\sqrt 3\beta}{6\pi(1+2\gamma)^2},\\
&\frac{\log n}n\frac{\partial l_{n3}(\alpha,\beta,\gamma)}{\partial\alpha}\to-\frac{\sqrt 3}{6\pi(1+\gamma)^2}, \frac{\log n}n\frac{\partial l_{n3}(\alpha,\beta,\gamma)}{\partial\beta}\to-\frac{\sqrt 3}{6\pi(1+2\gamma)^2}, \\
&\frac{\log n}n\frac{\partial l_{n3}(\alpha,\beta,\gamma)}{\partial\gamma}\to\frac {\sqrt 3\beta}{3\pi(1+2\gamma)^3 }.\\
\end{array}\right.\end{equation}
Hence, the theorem follows from (\ref{pfth2-17}), (\ref{pfth2-18})
and Taylor expansion.
\end{proof}

\begin{proof}[Proof of Theorem \ref{th2a}]
As in the proof of Theorem \ref{th2}, we define
\[\begin{array}{ll}
\bar Z_i&=\frac 1{(n+1)}\sum_{j=1}^n\int_0^1\int_0^1\frac{\Phi^-(v)}{\phi(\Phi^-(u))}(I(U_i\le u)-u)c(u,v;\rho_j)\,dudv\\
&\quad+\frac 1{(n+1)}\sum_{j=1}^n\int_0^1\int_0^1\frac{\Phi^-(u)}{\phi(\Phi^-(v))}(I(V_i\le v)-v)c(u,v;\rho_j)\,dudv\\
&\quad+\Big(\Phi^-(U_i)\Phi^-(V_i)-\rho_i\Big)\\
&=\bar Z_{i,1}+\bar Z_{i,2}+\bar Z_{i,3},
\end{array}\]
\[\begin{array}{ll}
\bar Z_i^*&=\frac 1{(n+1)}\sum_{j=1}^n(\frac jn)^{\gamma}\int_0^1\int_0^1\frac{\Phi^-(v)}{\phi(\Phi^-(u))}(I(U_i\le u)-u)c(u,v;\rho_j)\,dudv\\
&\quad+\frac 1{(n+1)}\sum_{j=1}^n(\frac jn)^{\gamma}\int_0^1\int_0^1\frac{\Phi^-(u)}{\phi(\Phi^-(v))}(I(V_i\le v)-v)c(u,v;\rho_j)\,dudv\\
&\quad+\Big(\Phi^-(U_i)\Phi^-(V_i)-\rho_i\Big)(\frac in)^{\gamma}\\
&=\bar Z_{i,1}^*+\bar Z_{i,2}^*+\bar Z_{i,3}^*
\end{array}\]
and
\[\begin{array}{ll}
\bar Z_i^{**}&=\frac 1{(n+1)}\sum_{j=1}^n(\frac jn)^{\gamma}\log(\frac jn)\int_0^1\int_0^1\frac{\Phi^-(v)}{\phi(\Phi^-(u))}(I(U_i\le u)-u)c(u,v;\rho_j)\,dudv\\
&\quad+\frac 1{(n+1)}\sum_{j=1}^n(\frac jn)^{\gamma}\log(\frac jn)\int_0^1\int_0^1\frac{\Phi^-(u)}{\phi(\Phi^-(v))}(I(V_i\le v)-v)c(u,v;\rho_j)\,dudv\\
&\quad+\Big(\Phi^-(U_i)\Phi^-(V_i)-\rho_i\Big)(\frac in)^{\gamma}\log(\frac in)\\
&=\bar Z_{i,1}^{**}+\bar Z_{i,2}^{**}+\bar Z_{i,3}^{**}.
\end{array}\]
Since
\[\int_0^1\Phi^-(v)c(u,v;\rho_i)\,dv=\rho_i\Phi^-(u)\quad\text{and}\quad \int_0^1\Phi^-(u)c(u,v;\rho_i)\,du=\rho_i\Phi^-(v),\]
we have
\[\left\{\begin{array}{ll}
&\bar Z_{i,1}=\left(\frac 1{n+1}\sum_{j=1}^n\rho_j\right)\int_0^1\frac{\Phi^-(u)}{\phi(\Phi^-(u))}(I(U_i\le u)-u)\,du, \\
&\bar Z_{i,2}=\left(\frac 1{n+1}\sum_{j=1}^n\rho_j\right)\int_0^1\frac{\Phi^-(v)}{\phi(\Phi^-(v))}(I(V_i\le v)-v)\,dv,\\
&\bar Z_{i,1}^*=\left(\frac 1{n+1}\sum_{j=1}^n\rho_j(\frac jn)^{\gamma}\right)\int_0^1\frac{\Phi^-(u)}{\phi(\Phi^-(u))}(I(U_i\le u)-u)\,du, \\
&\bar Z_{i,2}^*=\left(\frac 1{n+1}\sum_{j=1}^n\rho_j(\frac jn)^{\gamma}\right)\int_0^1\frac{\Phi^-(v)}{\phi(\Phi^-(v))}(I(V_i\le v)-v)\,dv,\\
&\bar Z_{i,1}^{**}=\left(\frac 1{n+1}\sum_{j=1}^n\rho_j(\frac jn)^{\gamma}\log(\frac jn)\right)\int_0^1\frac{\Phi^-(u)}{\phi(\Phi^-(u))}(I(U_i\le u)-u)\,du, \\
&\bar Z_{i,2}^{**}=\left(\frac 1{n+1}\sum_{j=1}^n\rho_j(\frac
jn)^{\gamma}\log(\frac
jn)\right)\int_0^1\frac{\Phi^-(v)}{\phi(\Phi^-(v))}(I(V_i\le
v)-v)\,dv.
\end{array}\right.\]
It is straightforward to check that
\begin{equation}\label{pfth2a-1}
\begin{array}{ll}
\E\bar Z_{i,1}^2&=\left(\frac 1{n+1}\sum_{j=1}^n\rho_j\right)^2\int_0^1\int_0^1\frac{\Phi^-(u_1)\Phi^-(u_2)}{\phi(\Phi^-(u_1))\phi(\Phi^-(u_2))}(u_1\wedge u_2-u_1u_2)\,du_1du_2\\
&=\left(\frac 1{n+1}\sum_{j=1}^n\rho_j\right)^22\int_0^1\int_0^{u_1}\frac{\Phi^-(u_1)\Phi^-(u_2)}{\phi(\Phi^-(u_1))\phi(\Phi^-(u_2))}u_2(1-u_1)\,du_2du_1\\
&=\left(\frac 1{n+1}\sum_{j=1}^n\rho_j\right)^2\int_0^1\frac{\Phi^-(u_1)}{\phi(\Phi^-(u_1))}(1-u_1)\int_{-\infty}^{\Phi^-(u_1)}\Phi(u_2)\,du_2^2du_1\\
&=\left(\frac 1{n+1}\sum_{j=1}^n\rho_j\right)^2\int_0^1\frac{\Phi^-(u_1)}{\phi(\Phi^-(u_1))}(1-u_1)\{(\Phi^-(u_1))^2u_1+\int_{-\infty}^{\Phi^-(u_1)}u_2\,d\phi(u_2)\}\,du_1\\
&=\left(\frac 1{n+1}\sum_{j=1}^n\rho_j\right)^2\int_0^1\frac{\Phi^-(u_1)}{\phi(\Phi^-(u_1))}(1-u_1)\{(\Phi^-(u_1))^2u_1+\Phi^-(u_1)\phi(\Phi^-(u_1))-u_1\}\,du_1\\
&=\left(\frac 1{n+1}\sum_{j=1}^n\rho_j\right)^2\int_{-\infty}^{\infty}u(1-\Phi(u))\{u^2\Phi(u)+u\phi(u)-\Phi(u)\}\,du\\
&=-\left(\frac 1{n+1}\sum_{j=1}^n\rho_j\right)^2\int_{-\infty}^{\infty}u(1-\Phi(u))\,d\phi(u)\\
&=\left(\frac 1{n+1}\sum_{j=1}^n\rho_j\right)^2\int_{-\infty}^{\infty}\phi(u)\{1-\Phi(u)-u\phi(u)\}\,du\\
&=\frac 12 \left(\frac 1{n+1}\sum_{j=1}^n\rho_j\right)^2\\
%&=\frac 12-\frac{\alpha+\beta/(1+\gamma)}{\log n}+o(1/\log n)
\end{array}\end{equation}
by noting that $u(1-\Phi(u))\Phi(u)$, $u^3(1-\Phi(u))\Phi(u)$ and
$u\phi^2(u)$ are odd functions,
\begin{equation}\label{pfth2a-2}
\E\bar Z_{i,2}^2=\frac 12\left(\frac
1{n+1}\sum_{j=1}^n\rho_j\right)^2 \quad\text{and}\quad \E\bar
Z_{i,3}^2=1+2\rho_i^2-\rho_i^2=1+\rho_i^2.
\end{equation}
By (\ref{pfth2-6aa}), we have
\[\int_0^1\int_0^1\frac{\Phi^-(u)\Phi^-(v)}{\phi(\Phi^-(u))\phi(\Phi^-(v))}\frac{dC(u,v;\rho)}{d\rho}\,dudv=\rho\]
for any $\rho\in (-1, 1)$. Taking derivative with respect to $\rho$
at both sides, we have
\[\int_0^1\int_0^1\frac{\Phi^-(u)\Phi^-(v)}{\phi(\Phi^-(u))\phi(\Phi^-(v))}\frac{d^2C(u,v;\rho)}{d\rho^2}\,dudv=1\]
for any $\rho\in (-1, 1)$. Therefore
\[\begin{array}{ll}
\frac 12&=\int_0^1\int_0^1\frac{\Phi^-(u)\Phi^-(v)}{\phi(\Phi^-(u))\phi(\Phi^-(v))}(u\wedge v-uv)\,dudv\\
&=\int_0^1\int_0^1\frac{\Phi^-(u)\Phi^-(v)}{\phi(\Phi^-(u))\phi(\Phi^-(v))}(C(u,v;1)-uv)\,dudv\\
&=\int_0^1\int_0^1\frac{\Phi^-(u)\Phi^-(v)}{\phi(\Phi^-(u))\phi(\Phi^-(v))}(C(u,v;\rho_i)-uv)\,dudv\\
&\quad+\rho_i(1-\rho_i)+\frac 12(1-\rho_i)^2+o(1/\log^2n),
\end{array}\]
which gives
\begin{equation}\label{pfth2a-4}
\begin{array}{ll}
\E(\bar Z_{i,1}\bar Z_{i,2})&=\left(\frac 1{n+1}\sum_{j=1}^n\rho_j\right)^2\int_0^1\int_0^1\frac{\Phi^-(u)\Phi^-(v)}{\phi(\Phi^-(u))\phi(\Phi^-(v))}(C(u,v;\rho_i)-uv)\,dudv\\
&=\left(\frac 1{n+1}\sum_{j=1}^n\rho_j\right)^2\left(\frac 12-\rho_i(1-\rho_i)-\frac 12(1-\rho_i)^2\right)+o(1/\log^2n)\\
&=\left(\frac 1{n+1}\sum_{j=1}^n\rho_j\right)^2\frac{\rho_i^2}2
+o(1/\log^2 n).
\end{array}\end{equation}
Since
\[\begin{array}{ll}
\E\left(I(U_i\le u)\Phi^-(U_i)\Phi^-(V_i)\right)&=\E\Big(I(U_i\le u)\Phi^-(U_i)\E(\Phi^-(V_i)|\Phi^-(U_i))\Big)\\
&=\E(I(U_i\le u)\Phi^-(U_i)\rho_i\Phi^-(U_i))\\
&=\rho_i\int_{-\infty}^{\Phi^-(u)}v^2\phi(v)\,dv\\
&=-\rho_i\int_{-\infty}^{\Phi^-(u)}v\,d\phi(v)\\
&=-\rho_{i}\left(\Phi^-(u)\phi(\Phi^-(u))-u\right),
\end{array}\] we have
\begin{equation}
\label{pfth2a-5}
\begin{array}{ll}
&\E(\bar Z_{i,1}\bar Z_{i,3})=\E(\bar Z_{i,2}\bar Z_{i,3})\\
=&\left(\frac 1{n+1}\sum_{j=1}^n\rho_j\right)\int_0^1\frac{\Phi^-(u)}{\phi(\Phi^-(u))}\rho_i\left(-\Phi^-(u)\phi(\Phi^-(u))+u-u\right)\,du\\
=&-\left(\frac 1{n+1}\sum_{j=1}^n\rho_j\right)\rho_i.\\
%=&-1+\frac{\alpha+\beta/(1+\gamma)}{\log n}+\frac{\alpha+\beta(i/n)^{\gamma}}{\log n}+o(1/\log n).
\end{array}\end{equation}
Put $\bar\rho=n^{-1}\sum_{j=1}^n\rho_j$. Then
\begin{equation}\label{pfth2a-5a}
\left\{\begin{array}{ll}
&\bar\rho=1-\frac{\alpha+\beta/(1+\gamma)}{\log n}+o(1/\log^2n)\\
&\frac 1n\sum_{j=1}^n\rho_j^2=\frac
1n\sum_{j=1}^n(\rho_j-\bar\rho)^2+\bar\rho^2=\frac{\beta^2}{\log^2
n}(\frac 1{1+2\gamma}-\frac
1{(1+\gamma)^2})+\bar\rho^2+o(1/\log^2n).
\end{array}\right.\end{equation}
It follows from (\ref{pfth2a-1})--(\ref{pfth2a-5a}) that
\begin{equation}\label{pfth2a-6}
\begin{array}{ll}
&\frac{1}{n}\sum_{i=1}^n\E\bar Z_{i}^2\\
=&(\frac 1{n+1}\sum_{j=1}^n\rho_j)^2+1+\frac 1n\sum_{i=1}^n\rho_i^2+\frac 1n\sum_{i=1}^n\rho_i^{2}(\frac 1{n+1}\sum_{j=1}^n\rho_j)^2\\
&-\frac 4n\sum_{i=1}^n\rho_i(\frac 1{n+1}\sum_{j=1}^n\rho_j)+o(1/\log^2n)\\
=&\bar\rho^2+1+\bar\rho^2+\frac 1n\sum_{j=1}^n(\rho_j-\bar\rho)^2+(\bar\rho^2+\frac 1n\sum_{j=1}^n(\rho_j-\bar\rho)^2)\bar\rho^2\\
&-4\bar\rho^2+o(1/\log^2n)\\
=&4(\frac{\alpha+\beta/(1+\gamma)}{\log
n})^2+\frac{2\beta^2}{\log^2n}(\frac1{1+2\gamma}-\frac
1{(1+\gamma)^2})+o(1/\log^2n).
\end{array}
\end{equation}
Similarly we can show that
\begin{equation}
\label{pfth2a-7}
\begin{array}{ll}
&\frac 1n\sum_{i=1}^n\E\bar Z_{i}^{*2}\\
=&\frac 1n\sum_{i=1}^n\sum_{j=1}^3\sum_{k=1}^3\E(\bar Z_{i,j}^*\bar Z_{i,k}^*)\\
=&\frac 1n\sum_{i=1}^n\left(\frac 1{2(1+\gamma)^2}+\frac 1{2(1+\gamma)^2}-\frac 1{1+\gamma}(\frac in)^{\gamma}\right.\\
&\left. +\frac 1{2(1+\gamma)^2}+\frac 1{2(1+\gamma)^2}-\frac 1{1+\gamma}(\frac in)^{\gamma}-\frac 1{1+\gamma}(\frac in)^{\gamma}-\frac 1{1+\gamma}(\frac in)^{\gamma}+2(\frac in)^{2\gamma} \right) +o(1)\\
=&\frac{2}{1+2\gamma}-\frac{2}{(1+\gamma)^2}+o(1),
\end{array}\end{equation}
\begin{equation}
\label{pfth2a-8}
\begin{array}{ll}
&\frac 1n\sum_{i=1}^n\E\bar Z_{i}^{**2}\\
=&\frac 1n\sum_{i=1}^n\sum_{j=1}^3\sum_{k=1}^3\E(\bar Z_{i,j}^{**}\bar Z_{i,k}^{**})\\
=&\frac 1n\sum_{i=1}^n\left(\frac 1{2(1+\gamma)^4}+\frac 1{2(1+\gamma)^4}+\frac 1{(1+\gamma)^2}(\frac in)^{\gamma}\log(\frac in)\right.\\
&\left. +\frac 1{2(1+\gamma)^4}+\frac 1{2(1+\gamma)^4}+\frac 1{(1+\gamma)^2}(\frac in)^{\gamma}\log(\frac in)\right.\\
&\left. +\frac 1{(1+\gamma)^2}(\frac in)^{\gamma}\log(\frac in)+\frac 1{(1+\gamma)^2}(\frac in)^{\gamma}\log(\frac in)+2(\frac in)^{2\gamma}\log^2(\frac in)\right) +o(1)\\
=&\frac{4}{(1+2\gamma)^3}-\frac{2}{(1+\gamma)^4}+o(1),
\end{array}\end{equation}
\begin{equation}\label{pfth2a-9}
\frac{\log n}{n}\sum_{i=1}^n\E(\bar Z_{i}\bar Z_i^*)=o(1),\quad
\frac{\log n}n\sum_{i=1}^n\E(\bar Z_i\bar Z_i^{**})=o(1)
\end{equation}
and
\begin{equation}\label{pfth2a-10}
\begin{array}{ll}
&\frac 1n\sum_{i=1}^n\E(\bar Z_i^*\bar Z_i^{**})\\
=&\frac 1n\sum_{i=1}^n\sum_{j=1}^3\sum_{k=1}^3\E(\bar Z_{i,j}^*\bar Z_{i,k}^{**})\\
=&\frac 1n\sum_{i=1}^n\left(-\frac 1{2(1+\gamma)^3}-\frac 1{2(1+\gamma)^3}-\frac1{1+\gamma}(\frac in)^{\gamma}\log(\frac in) \right.\\
& \left. -\frac 1{2(1+\gamma)^3}-\frac 1{2(1+\gamma)^3}-\frac 1{1+\gamma}(\frac in)^{\gamma}\log(\frac in) \right. \\
&\left. +\frac1{(1+\gamma)^2}(\frac in)^{\gamma}+\frac 1{(1+\gamma)^2}(\frac in)^{\gamma}+2(\frac in)^{2\gamma}\log(\frac in)\right) +o(1)\\
=&-\frac 2{(1+2\gamma)^2}+\frac 2{(1+\gamma)^3}+o(1).
\end{array}\end{equation}
Therefore, using (\ref{pfth2a-6})--(\ref{pfth2a-10}) and the same
arguments in proving (\ref{pfth2-17}), we can show that
\begin{equation}
\label{pfth2a-11} \left(\frac {\log n}{\sqrt
n}l_{n1}^*(\alpha,\beta,\gamma), \frac 1{\sqrt
n}l_{n2}^*(\alpha,\beta,\gamma), \frac 1{\sqrt
n}l_{n3}^*(\alpha,\beta,\gamma)\right)^T\overset{d}{\to} N(0,
\Sigma^*).
\end{equation}
It is straightforward to check that
\begin{equation}
\label{pfth2a-12} \left\{\begin{array}{ll}
&\frac{\log n}n\frac{\partial l_{n1}^*(\alpha,\beta,\gamma)}{\partial\alpha}=1, \frac{\log n}n\frac{\partial l_{n1}^*(\alpha,\beta,\gamma)}{\partial\beta}\to\frac1{1+\gamma}, \frac{\log n}n\frac{\partial l_{n1}^*(\alpha,\beta,\gamma)}{\partial\gamma}\to-\frac {\beta}{(1+\gamma)^2},\\
&\frac{\log n}n\frac{\partial l_{n2}^*(\alpha,\beta,\gamma)}{\partial\alpha}\to\frac1{1+\gamma}, \frac{\log n}n\frac{\partial l_{n2}^*(\alpha,\beta,\gamma)}{\partial\beta}\to\frac1{1+2\gamma}, \frac{\log n}n\frac{\partial l_{n2}^*(\alpha,\beta,\gamma)}{\partial\gamma}\to-\frac {\beta}{(1+2\gamma)^2},\\
&\frac{\log n}n\frac{\partial l_{n3}^*(\alpha,\beta,\gamma)}{\partial\alpha}\to-\frac{1}{(1+\gamma)^2}, \frac{\log n}n\frac{\partial l_{n3}^*(\alpha,\beta,\gamma)}{\partial\beta}\to-\frac1{(1+2\gamma)^2}, \frac{\log n}n\frac{\partial l_{n3}^*(\alpha,\beta,\gamma)}{\partial\gamma}\to\frac {2\beta}{(1+2\gamma)^3},\\
\end{array}\right.\end{equation}
Hence, the theorem follows from (\ref{pfth2a-11}), (\ref{pfth2a-12})
and Taylor expansions.
\end{proof}

\begin{proof}[Proof of Theorem \ref{th3}.]
It follows from the proof of Theorem \ref{th2} with known
$\gamma=1$.
\end{proof}

\begin{proof}[Proof of Theorem \ref{th3a}.]
It follows from the proof of Theorem \ref{th2a} with known
$\gamma=1$.
\end{proof}

\begin{proof}[Proof of Theorem \ref{th4}.]
Note that
\begin{equation}\label{pfth4-1}
(\log n)Q''(s)\to -\frac{\sqrt 3m''(s)}{6\pi}\quad\text{and}\quad
\cos(2\pi Q(s))\to\frac{\sqrt 3}2\end{equation} and
\begin{equation}
\label{pfth4-1a}
\begin{array}{ll}
&\E\Big((F_1(X_i)-\frac 12)(F_2(Y_i)-\frac 12)-\frac 1{2\pi}\arcsin(\frac{\rho_i}2)\Big)^2\\
=&\E\left((F_1(X_i)-\frac 12)^2(F_2(Y_i)-\frac 12)^2\right)-\left(\frac 1{2\pi}\arcsin(\frac{\rho_i}2)\right)^2\\
\to&\frac 1{80}-(\frac {1}{12})^2=\frac {1}{180}.
\end{array}\end{equation}
It follows from (\ref{pfth4-1a}) and the standard arguments in local
linear estimation (e.g., Fan and Gijbels \cite{fan}) that
\begin{equation}\label{pfth4-2}
\sqrt{nh}\left(\hat Q(s)-Q(s)-\frac 12
Q''(s)h^{2}\int_{-1}^1t^2k(t)\,dt\right) \overset{d}{\to} N\left(0,
\frac 1{180}\int_{-1}^1k^2(t)\,dt \right).
\end{equation}
Hence it follows from  (\ref{pfth4-1}) and (\ref{pfth4-2}) that
\[\begin{array}{ll}
&\frac{\sqrt{nh}}{\log n}(\hat m(s)-m(s))\\
=&-4\pi \cos(2\pi Q(s))\sqrt{nh}(\hat Q(s)-Q(s))+o_p(1)\\
\overset{d}{\to}&N\left(\frac 12\lambda
m''(s)\int_{-1}^1t^2k(t)\,dt,~ \frac
{\pi^2}{15}\int_{-1}^1k^2(t)\,dt\right),
\end{array}\]
i.e., the theorem holds.
\end{proof}

\begin{proof}[Proof of Theorem \ref{th4a}]
It follows from standard arguments in local linear estimation.
\end{proof}

\vspace{1cm}

\noindent {\bf Acknowledgements}~~We thank two reviewers for helpful comments. Liang Peng was partly supported by the Simons Foundation. Zuoxiang Peng was supported by the National Natural
Science Foundation of China (Grant no. 11171275) and the Natural
Science Foundation Project of CQ (Grant no. cstc2012jjA00029).

\end{document}